\newtheorem{thm}{Theorem}[section]%
\newtheorem{defi}[thm]{Definition}%
\newtheorem{lemma}[thm]{Lemma}
\newtheorem{rem}[thm]{Remark}%
\title{Confidence regions for the multinomial\\ parameter with small sample
  size}
\author{Djalil~\textsc{Chafa\"\i} and Didier~\textsc{Concordet}}
\date{\small Preprint -- March 2008 -- Revised July 2008 \\ JASA final Accepted Version December 2008}
\begin{document}

\maketitle

\begin{abstract}
  Consider the observation of $n$ iid realizations of an experiment with
  $d\geq2$ possible outcomes, which corresponds to a single observation of a
  multinomial distribution $\mathcal{M}_d(n,\mathbf{p})$ where $\mathbf{p}$ is
  an unknown discrete distribution on $\{1,\ldots,d\}$. In many applications,
  the construction of a confidence region for $\mathbf{p}$ when $n$ is small
  is crucial. This concrete challenging problem has a long history. It is well
  known that the confidence regions built from asymptotic statistics do not
  have good coverage when $n$ is small. On the other hand, most available
  methods providing non-asymptotic regions with controlled coverage are
  limited to the binomial case $d=2$. In the present work, we propose a new
  method valid for any $d\geq2$. This method provides confidence regions with
  controlled coverage and small volume, and consists of the inversion of the
  ``covering collection'' associated with level-sets of the likelihood. The
  behavior when $d/n$ tends to infinity remains an interesting open problem
  beyond the scope of this work.
\end{abstract}

{\footnotesize\textbf{Keywords.} Confidence regions, small samples,
  multinomial distribution.}

{\footnotesize\tableofcontents}

\section{Introduction}

Consider the observation of $n$ iid realizations $Y_1,\ldots,Y_n$ of an
experiment with $d\geq2$ possible outcomes with common discrete distribution
$p_1\delta_1+\cdots+p_d\delta_d$ on $\{1,\ldots,d\}$, where $\delta_a$ denotes
the Dirac mass at point $a$. This corresponds to a single observation
$\mathbf{X}=(X_1,\ldots,X_d)$ of the multinomial distribution
$$
\mathcal{M}_d(n,\mathbf{p}) %
=\sum_{\stackrel{0\leq k_1,\ldots,k_n\leq n}{k_1+\cdots+k_d=n}}\!\!\!%
\mu_p(k)\delta_{(k_1,\ldots,k_d)}
\quad\text{where}\quad
\mu_{p}(k)=p_1^{k_1}\cdots p_d^{k_d}\frac{n!}{k_1!\cdots k_d!}
$$
where $\mathbf{p}=(p_1,\ldots,p_d)$ and $X_k=\mathrm{Card}\{1\leq i\leq
n\text{ such that }Y_i=k\}$ for every $1\leq k\leq d$. Here $d$ is known,
$\mathbf{X}$ is observed, and $\mathbf{p}$ is unknown. The present article
deals with the problem of constructing a confidence region for $\mathbf{p}$
from the single observation $\mathbf{X}$ of $\mathcal{M}_d(n,\mathbf{p})$, in
the \emph{non-asymptotic} situation where $n$ is \emph{small}. More precisely,
let
$$
\Lambda_d=\{(u_1,\ldots,u_d)\in[0,1]^d\text{ such that } u_1+\cdots+u_d=1\}
$$
be the simplex of probability distributions on $\{1,\ldots,d\}$. The
observation $\mathbf{X}\sim\mathcal{M}_d(n,\mathbf{p})$ lies in the discrete
simplex
\begin{equation}\label{eq:Emultinom}
  E_{d}=\left\{(x_{1},\ldots,x_{d}) \in \{0,\ldots,n\}^{d} %
  \ \text{ such that }\ x_{1}+\cdots+x_{d}=n\right \}.
\end{equation}
From the single observation $\mathbf{X}$ and for some prescribed level
$\alpha\in(0,1)$, we are interested in the construction of a random region
$R_\alpha(\mathbf{X})\subset\Lambda_d$ depending on $\mathbf{X}$ and $\alpha$
such that
\begin{itemize}
\item \emph{the coverage probability has a prescribed lower bound} %
  \begin{equation}\label{eq:cov}
    \mathbb{P}(\mathbf{p}\in R_\alpha(\mathbf{X}))\geq 1-\alpha
  \end{equation}
\item \emph{the volume of $R_\alpha(\mathbf{X})$ in $\mathbb{R}^d$ is as small
    as possible.}
\end{itemize}
These two properties are the most important in practice. We propose to solve
this problem by defining the ``level-set'' confidence region $R_\alpha\left
  (\mathbf{X}\right )\subset\Lambda_d$ given by
\begin{equation}\label{eq:our-region}
R_\alpha\left (\mathbf{X}\right ) %
= \left \{\mathbf{p}\in \Lambda_d\ \ \text{ such that }\ %
  \ \mu_{\mathbf{p}}(\mathbf{X})%
  \geq u(\mathbf{p},\alpha)\right\}
\end{equation}
where 
$$
u(\mathbf{p},\alpha)= \sup\biggr\{ u\in [0,1] \ \text{ such that }\ %
\!\!\!\sum_{\substack{k\in E_d\\\mu_{\mathbf{p}}(k)\geq u}} %
\mu_{\mathbf{p}}(k) %
\geq 1-\alpha \biggr\}.
$$
One can check that this confidence region \eqref{eq:our-region} contains
always the maximum likelihood estimator $n^{-1}\mathbf{X}$ of $\mathbf{p}$.
Moreover, this region can be easily computed numerically, \emph{i.e.} for each
value of $\mathbf{p}$ one may compute $u(\mathbf{p},\alpha)$ and compare it to
$\mu_{\mathbf{p}}(\mathbf{X})$. Furthermore, it fulfills \eqref{eq:cov}, and
the numerical computations presented in Section \ref{se:comp-expl} show that
it has small volume and actual coverage often close to $1-\alpha$ at least for
$d=2$ and $d=3$. In fact, this region is a special case of a generic method of
construction based on \emph{covering collections}. The concept of covering
collections is presented in Section \ref{se:covering-collections} and
encompasses as another special case the classical Clopper-Pearson interval and
its multivariate extensions. On the other hand, it is well known (see for
instance Remark \ref{rm:tests}) that a natural correspondence via inversion
exists between confidence regions with prescribed coverage and families of
tests with prescribed level. However, this correspondence is a simple
translation and does not give any clue to construct regions with small volume.

Two kinds of methods for the construction of a confidence region for
$\mathbf{p}$ can be found in the literature (see for instance
\cite{brown2,brown,blyth3,blyth1,blyth2,newcombe} for reviews). The first
methods give confidence regions with small volume but fail to control the
prescribed coverage (e.g. Bayesian methods with Jeffrey prior, Wald or Wilson
score methods based on the Central Limit Theorem, Bootstrapped
regions,\ldots), and the second control the prescribed coverage but have too
large volume to be useful (e.g. concentration methods based on
Hoeffding-Bernstein inequalities, Clopper-Pearson type methods, \ldots). Note
that the discrete nature of the multinomial distribution produces a staircase
effect which makes it difficult to construct non-asymptotic regions with
coverage equal exactly to $1-\alpha$. For a discussion of such aspects, we
refer for instance to Agresti et al. \cite{MR1628435,MR1814845,MR1977232}. In
general, it seems reasonable to expect a coverage of at least $1-\alpha$,
without being too conservative, while maintaining the volume as small as
possible. Here the term conservative means that the coverage is greater than
$1-\alpha$. Even when $d=2$ and $n$ is large but finite, the confidence
regions built from asymptotic approaches based on the Central Limit Theorem
have a poor and uncontrolled coverage. It is also the case for bootstrapped
versions which only improve the coverage probability asymptotically (see
\cite{MR1323066,MR2077050,MR1736447,MR1145237,garcia-perez,wilson}). For the
binomial case $d=2$, one of the best known method is due to Blyth \& Still
\cite{blyth3} and combines various approaches. To our knowledge, the available
methods for the general multinomial case $d>2$ are unfortunately asymptotic or
Bayesian, which explains their poor performances in terms of coverage or
volume when $n$ is small (see
\cite{MR1893326,MR1325142,MR626900,MR1736447,MR2077050}).

The coverage of our region \eqref{eq:our-region} is strictly controlled since
it fulfills \eqref{eq:cov} whatever the values of $d$ and $n$. However, this
says nothing about the actual coverage and the actual mean volume. The
comparisons presented in Section \ref{se:comp-expl} suggest that our region
for $d=2$ is comparable to the Blyth \& Still region in terms of actual
coverage and actual mean volume. For $d=3$, the Blyth \& Still method is no
longer available, and our region seems to have an actual coverage close to the
prescribed level while maintaining a volume comparable to the asymptotic
region constructed with the score method based on the Central Limit Theorem.
Section \ref{se:comp-expl} provides two concrete examples, one for $d=3$ and
another one for $d=4$ in relation to the $\chi^2$-test. The article ends with
a final discussion.

\section{Covering collections}
\label{se:covering-collections}

The aim of this section is to introduce the notion of \emph{covering
  collection}, which allows confidence regions to be built in a general
abstract space. Let us consider a random variable $X:(\Omega,\mathcal{A})
\rightarrow (E, \mathcal{B}_{E})$ having a distribution $\mu_{\theta^{*}}$
where $\theta^{*}\in\Theta$. For some $\alpha\in(0,1)$, we would like to
construct a confidence region $R_{\alpha}(X)$ for $\theta^{*}$ with a coverage
of at least $(1-\alpha)$, from a single realization of $X$. In other words,
\begin{equation}\label{eq:coverage}
  \mathbb{P}\left (\theta^{*}\in R_\alpha\left (X\right )\right )\geq 1-\alpha.
\end{equation}

\begin{defi}[Covering collection]
  A \emph{covering collection} of $E$ is a collection of measurable events
  $(A_{k})_{k\in \mathcal{K}}\subset\mathcal{B}_{E}$ such that
  \begin{itemize}
  \item $\mathcal{K}$ is totally ordered and has a minimal element and a
    maximal element;
  \item if $k\leq k'$ then $A_k\subset A_{k'}$ with equality if and only if
    $k=k'$;
  \item $A_{\min(\mathcal{K})}=\emptyset$ and $A_{\max(\mathcal{K})}=E$.
  \end{itemize}
\end{defi}

For instance, for $E=\{0,1,\ldots,n\}$, the sequence of sets
$$
\emptyset, \{\sigma(0)\}, \{\sigma(0), \sigma(1)\}, \ldots,
\{\sigma(0),\sigma(1),\ldots,\sigma(n)\}=E
$$
is a covering collection of $E$ for any permutation $\sigma$ of $E$. For
$E=\mathbb{R}$, the collection $(A_t)_{t\in\overline{\mathbb{R}}}$ where
$\overline{\mathbb{R}}=\mathbb{R}\cup\{-\infty,+\infty\}$ defined by
$A_{-\infty}=\emptyset$, $A_t=(-\infty,t]$ for every $t\in\mathbb{R}$, and
$A_{+\infty}=\mathbb{R}$ is a covering collection of $E$. Many other choices
are possible, like $A_t=[-t,+t]$ or $A_t=[t,+\infty)$. We can recognize the
usual shapes of the confidence regions used in univariate Statistics.

\begin{thm}[Confidence region associated with a covering collection]\label{th:conf_region}
  Let $(A_k)_{k\in\mathcal{K}}$ be a covering collection of $E$, and $k_X$ be
  the smallest $k\in \mathcal{K}$ such that $X\in A_k$. For every
  $\alpha\in(0,1)$, the region $R_\alpha(X)$ defined below satisfies to
  \eqref{eq:coverage}.
  \begin{equation}\label{eq:conf_int}
    R_\alpha\left (X\right ) %
    = \left \{\theta\in\Theta\ \text{ such that }\ %
      \mu_\theta(A_{k_X}) \geq  \alpha \right \}.
  \end{equation}
\end{thm}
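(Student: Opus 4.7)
The plan is to bound directly the probability of the complementary event. By definition of $R_\alpha(X)$, one has $\theta^{*}\notin R_\alpha(X)$ if and only if $\mu_{\theta^{*}}(A_{k_X})<\alpha$, so it suffices to prove that
$$
\mathbb{P}\bigl(\mu_{\theta^{*}}(A_{k_X})<\alpha\bigr)\leq\alpha.
$$

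The key step is to introduce the \emph{bad index set}
$$
\mathcal{K}_{-}=\{k\in\mathcal{K}\ :\ \mu_{\theta^{*}}(A_{k})<\alpha\}
$$
together with the associated \emph{bad event} $A^{*}=\bigcup_{k\in\mathcal{K}_{-}}A_{k}$, and then to show that $\{X : \mu_{\theta^{*}}(A_{k_X})<\alpha\}$ coincides with $A^{*}$. The inclusion $\subseteq$ is immediate: whenever $\mu_{\theta^{*}}(A_{k_X})<\alpha$, one has $k_X\in\mathcal{K}_{-}$ and $X\in A_{k_X}\subseteq A^{*}$. The reverse inclusion rests on the two defining properties of a covering collection together with the minimality of $k_X$: if $X\in A_{k}$ for some $k\in\mathcal{K}_{-}$, then $k_X\leq k$ by minimality, hence $A_{k_X}\subseteq A_{k}$ by nesting, and therefore $\mu_{\theta^{*}}(A_{k_X})\leq\mu_{\theta^{*}}(A_{k})<\alpha$.

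Once this identification is achieved, the conclusion reduces to the bound $\mu_{\theta^{*}}(A^{*})\leq\alpha$. In the situations of interest for this paper, where $\mathcal{K}$ is finite (or at least such that $\mathcal{K}_{-}$ admits a maximum $k^{*}$), one simply has $A^{*}=A_{k^{*}}$ and $\mu_{\theta^{*}}(A^{*})=\mu_{\theta^{*}}(A_{k^{*}})<\alpha$. In the fully general case, $A^{*}$ is an increasing union indexed by a totally ordered chain, so continuity from below of $\mu_{\theta^{*}}$ yields $\mu_{\theta^{*}}(A^{*})=\sup_{k\in\mathcal{K}_{-}}\mu_{\theta^{*}}(A_{k})\leq\alpha$. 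I expect the only delicate point to arise precisely here, since one may need to extract a countable cofinal chain in $\mathcal{K}_{-}$ in order to invoke continuity from below rigorously; this issue is purely measure-theoretic and is inessential for the multinomial applications developed later in the paper.
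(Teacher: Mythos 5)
Your proof is correct and is essentially the paper's own argument: your set $A^{*}=\bigcup_{k\in\mathcal{K}_{-}}A_{k}$ is exactly the paper's $A_{k_{\alpha}(\theta^{*})}$, where $k_{\alpha}(\theta^{*})$ is the largest $k$ with $\mu_{\theta^{*}}(A_{k})<\alpha$, and your two-directional identification of $\{\mu_{\theta^{*}}(A_{k_X})<\alpha\}$ with $A^{*}$ is the same equivalence the paper states before computing $\mathbb{P}(\theta^{*}\in R_{\alpha}(X))=1-\mu_{\theta^{*}}(A_{k_{\alpha}(\theta^{*})})\geq 1-\alpha$. The only difference is that you explicitly flag the case where $\mathcal{K}_{-}$ has no maximum and handle it by continuity from below, a measure-theoretic refinement the paper silently ignores; this is a welcome extra care but not a different proof.
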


\begin{proof}
  For every $\theta\in \Theta$, let $k_{\alpha}(\theta)$ be the largest
  $k\in\mathcal{K}$ such that $\mu_{\theta}(A_k)<\alpha$. With this definition
  of $k_{\alpha}(\cdot)$, we then have
  $$
  x\in A_{k_\alpha(\theta)} %
  \quad\text{if and only if}\quad %
  \mu_{\theta}(A_{k_x}) < \alpha.
  $$
  Thus we have
  \begin{align*}
  \mathbb{P}\left (\theta^{*}\in R_\alpha(X) \right ) %
  &= \mathbb{P}\left (\mu_{\theta^{*}}(A_{k_X}) \geq \alpha \right ) \\
  &= \mathbb{P}\left ( X\notin A_{k_\alpha(\theta^{*})}\right) \\
  &=1-\mu_{\theta^{*}}\left ( A_{k_\alpha(\theta^{*})} \right ) \\
  &\geq 1-\alpha.
  \end{align*}
\end{proof}

These confidence regions are highly dependent on the chosen covering
collection $(A_{k})_{k\in\mathcal{K}}$. Each choice of covering collection
gives a particular region $R_{\alpha}(X)$. Note that a small value of $k_X$
gives a small set $A_{k_X}$ and thus leads to a confidence region with a small
volume. For instance, assume that we have two realizations $x_{1}$ and $x_{2}$
of $X$ with $k_{x_{1}}<k_{x_{2}}$. For a given sequence
$(A_{k})_{k\in\mathcal{K}}$, we have $A_{k_{x_{1}}}\subset A_{k_{x_{2}}}$ and
thus $R_\alpha\left (x_{1}\right )\subset R_\alpha\left (x_{2}\right )$. It is
tempting to choose the covering collection $(A_{k})_{k\in\mathcal{K}}$ in such
a way that $k_{X}$ is as small as possible. Unfortunately, with such a choice,
the covering collection $(A_{k})_{k\in\mathcal{K}}$ could be random and the
coverage of the associated region could be less than the prescribed level
$1-\alpha$.

Note that the set $A_{k_X}$ can be empty, which means that a confidence region
cannot be built with the sequence $(A_{k})_{k\in \mathcal{K}}$. In contrast,
the case where $A_{k_X}=E$ leads to the trivial region $R_\alpha(X)=\Theta$.
In the case where $A_{k_X}=\{X\}$, we have
$\mu_\theta(A_{k_X})=\mu_\theta(\{X\})$, which is the likelihood of $X$ at
point $\theta$, and the region $R_\alpha(X)$ corresponds to the complement of
a level-set of the likelihood.

The following symmetrization lemma allows (for instance) the construction of
two-sided confidence intervals from one-sided confidence intervals. We use it
in Section \ref{ss:clopper-pearson} to interpret the Clopper-Pearson
confidence interval as a special case of the covering collection method.

\begin{lemma}[Symmetrization]\label{le:symmetrization}
  Consider a covering collection $(A_k)_{0\leq k\leq \kappa}$ of $E$. For
  every $0\leq k\leq \kappa$ let us define ${A'}_k=E\setminus A_{\kappa-k}$.
  For any $\theta\in\Theta$, any $X\sim\mu_\theta$, and any $\alpha\in(0,1)$,
  we construct
  $$
  R_{\frac{1}{2}\alpha}%
  =\left\{\theta\in\Theta;\ \mu_\theta(A_{k_X})>\frac{1}{2}\alpha\right\} %
  \quad\text{and}\quad %
  R'_{\frac{1}{2}\alpha}%
  =\left\{\theta\in\Theta;\ \mu_\theta(A'_{k'_X})>\frac{1}{2}\alpha\right\}
  $$
  where $k'_X$ is built from $({A'_{k}})_{0\leq k\leq \kappa}$ as $k_X$ from
  $(A_k)_{0\leq k\leq \kappa}$ and $A'_{k'_X}=E\setminus A_{k_X-1}$. Then
  $$
  R_{\frac{1}{2}\alpha}\cap R'_{\frac{1}{2}\alpha} 
  $$
  is a confidence region with coverage greater than or equal to $1-\alpha$.
\end{lemma}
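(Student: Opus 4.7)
The plan is to deduce the lemma from Theorem~\ref{th:conf_region} by two observations: the reflected family $(A'_k)$ is itself a covering collection, and the two one-sided coverage bounds can be glued together by a union bound. Splitting the error budget $\alpha$ into two halves $\alpha/2$ matches the definition of the two regions in the statement.

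First I would verify that $(A'_k)_{0\leq k\leq\kappa}$ satisfies the three axioms of a covering collection. The endpoints match since $A'_0=E\setminus A_\kappa=\emptyset$ and $A'_\kappa=E\setminus A_0=E$, and monotonicity is preserved because $k\leq k'$ implies $\kappa-k\geq\kappa-k'$, hence $A_{\kappa-k}\supseteq A_{\kappa-k'}$, hence $A'_k\subseteq A'_{k'}$, with equality iff $k=k'$. Along the way I would also record the identity $k'_X=\kappa-k_X+1$, which follows since $X\in A'_k$ iff $X\notin A_{\kappa-k}$ iff $\kappa-k<k_X$; this in turn confirms the announced equality $A'_{k'_X}=E\setminus A_{k_X-1}$.

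Next I would apply Theorem~\ref{th:conf_region} separately to each of the two covering collections at level $\alpha/2$. The only (minor) adaptation is the switch from the non-strict inequality $\geq\alpha$ of the theorem to the strict inequality $>\alpha/2$ used in the lemma; this is handled by redefining $k_{\alpha/2}(\theta)$ to be the largest $k$ with $\mu_\theta(A_k)\leq\alpha/2$. The same chain of equivalences as in the proof of Theorem~\ref{th:conf_region} then goes through, yielding
\[
\mathbb{P}\bigl(\theta^*\in R_{\alpha/2}\bigr)=1-\mu_{\theta^*}\bigl(A_{k_{\alpha/2}(\theta^*)}\bigr)\geq 1-\tfrac{\alpha}{2},
\]
and the same argument applied to $(A'_k)$ gives $\mathbb{P}(\theta^*\in R'_{\alpha/2})\geq 1-\alpha/2$.

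A one-line union bound will then close the argument:
\[
\mathbb{P}\bigl(\theta^*\in R_{\alpha/2}\cap R'_{\alpha/2}\bigr)
\geq 1-\mathbb{P}\bigl(\theta^*\notin R_{\alpha/2}\bigr)-\mathbb{P}\bigl(\theta^*\notin R'_{\alpha/2}\bigr)
\geq 1-\tfrac{\alpha}{2}-\tfrac{\alpha}{2}=1-\alpha.
\]
I do not expect any serious obstacle: the conceptual content is the symmetry observation that complementation turns a covering collection into another one, and the probabilistic content is just Bonferroni. The only technical care needed concerns the strict versus non-strict thresholds, handled by the small bookkeeping change above.
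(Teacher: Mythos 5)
Your proof is correct and follows essentially the same route as the paper: each one-sided region is treated as a covering-collection region at level $\tfrac{1}{2}\alpha$, and the two are combined by intersection. The only difference is in the final step, where you invoke a plain union bound while the paper observes that the complements of $R_{\frac{1}{2}\alpha}$ and $R'_{\frac{1}{2}\alpha}$ are in fact disjoint (because $\mu_\theta(A_{k_X})+\mu_\theta(A'_{k'_X})\geq 1$), which makes the bound an equality; both give coverage $\geq 1-\alpha$, and you are if anything more careful than the paper about the strict-versus-non-strict threshold and about verifying that $(A'_k)_{0\leq k\leq\kappa}$ is itself a covering collection.
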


\begin{proof}
  We have
  $\mu_\theta(A_{k_X})+\mu_{\theta}(A'_{{k}_X})=1+\mu_\theta(\{X\})\geq 1$ and
  thus $R_{\frac{1}{2}\alpha}$ and $R'_{\frac{1}{2}\alpha}$ have disjoint
  complements. The conclusion follows now from a general fact: if $R_1$ and
  $R_2$ are two confidence regions with a coverage of at least
  $1-\frac{1}{2}\alpha$ such that $R_1\cup R_2=E$ (equivalently
  $R_1^c=\Theta\setminus R_1$ and $R_2^c=\Theta\setminus R_2$ are disjoint),
  then $R_1^c$ and $R_2^c$ are disjoint and thus $R_1\cap R_2 = (R_1^c\cup
  R_2^c)^c$ is a confidence region with a coverage of at least $1-\alpha$.
\end{proof}

\begin{rem}[Discrete case and staircase effect]
  Let $(A_k)_{k\in\mathcal{K}}$ be a covering collection of a finite set $E$.
  Due to staircase effects, the coverage of the confidence regions constructed
  from this covering collection cannot take arbitrary values in $(0,1)$. These
  staircase effects can be reduced by using a fully granular collection for
  which $\mathrm{Card}(\mathcal{K})=\mathrm{Card}(E)$. The term \emph{fully
    granular} means that the elements of the collection are obtained by adding
  the points of $E$ one by one. It is impossible to remove completely the
  staircase effects when $E$ is discrete, while maintaining a prescribed lower
  bound on the coverage.
\end{rem}

\begin{rem}[Reverse regions]
  For the region ${R}_\alpha(X) = \{ \theta\in\Theta; \mu_\theta(A_{k_X}) \leq 1-\alpha\}$ we
  have
  $$
  \mathbb{P}({R}_\alpha) %
  =\mathbb{P}(\mu_\theta(A_{k_X})\leq 1-\alpha) %
  =\mathbb{P}(X\in A_{k_{1-\alpha}}) %
  =\mu_\theta(A_{k_{1-\alpha}}) \leq 1-\alpha.
  $$
\end{rem}

\begin{rem}[Link with tests]\label{rm:tests}
Let us recall briefly the correspondence between confidence regions and
statistical tests (we refer to \cite[Section 48]{MR1712750} for further
details). Consider a parametric model $(\mu_\theta)_{\theta\in\Theta}$ with
data space $\mathcal{X}$. For any fixed $\theta_0\in\Theta$, the test problem
of $H_0:\theta=\theta_0$ versus $H_1:\theta\neq\theta_0$ with level
$\alpha\in(0,1)$ corresponds to the construction of an acceptance region
$C_\alpha(\theta_0)\subset\mathcal{X}$ such that
$$
\mu_{\theta_0}(C_\alpha(\theta_0))\geq 1-\alpha.
$$
The construction of a confidence region for $\theta_0$ can be done by
inversion (\emph{i.e.} by collecting the values of $\theta_0$ for which $H_0$ is
accepted). Namely, for every $x\in\mathcal{X}$, one can define the region
$R_\alpha(x) \subset\Theta$ by
$$
R_\alpha(x) %
=\{\theta\in\Theta\text{ such that } x\in C_\alpha(\theta)\} .
$$
Now if $X\sim\mu_{\theta_0}$ then
$$
\mathbb{P}(\theta_0\in R_\alpha(X))
= \mathbb{P}(X\in C_\alpha(\theta_0)) %
=\mu_{\theta_0}(C_\alpha(\theta_0)) %
\geq 1-\alpha. 
$$
This shows that for any fixed $\theta_0\in\Theta$, the set
$R_\alpha(X)\subset\Theta$ is a confidence region for $\theta_0$ when
$X\sim\mu_{\theta_0}$. Conversely, if for every $\theta_0\in\Theta$ and every
$x\in\mathcal{X}$ one has a region $R_\alpha(x)\subset\Theta$ such that
$\mathbb{P}(\theta_0\in R_\alpha(X))\geq 1-\alpha$ when $X\sim\mu_{\theta_0}$,
then one can construct immediately a test for $H_0:\theta=\theta_0$ versus
$H_1:\theta\neq\theta_0$ with acceptance region
$$
C_\alpha(\theta_0) %
=\{x\in\mathcal{X}\text{ such that }\theta_0\in R_\alpha(x)\}.
$$
Note that this correspondence between confidence regions and statistical tests
can be extended to the composite case $H_0:\theta\in\Theta_0$ versus
$H_1:\theta\not\in\Theta_0$ where $\Theta_0\subset\Theta$.
\end{rem}


\subsection{The level-sets regions}
\label{ss:level-sets}

In this section, we show that the ``level-sets'' confidence region
\eqref{eq:our-region} is a special case of the covering collection method. It
is easier to consider here a decreasing covering collection (the corresponding
version of Theorem \ref{th:conf_region} is immediate). Let us consider a
random variable $X:(\Omega,\mathcal{A}) \rightarrow (E, \mathcal{B}_{E})$ with
law $\mu_{\theta^{*}}$ where $\theta^{*}\in\Theta$. For every $u\geq0$ and
$\theta\in \Theta$, let us define
$$
A(\theta,u)= \{ x\in E \text{ such that }\mu_\theta(x)\geq u\}.
$$
For every $\theta\in \Theta$, the collection $(A(\theta,u))_{u\geq 0}$ is
decreasing with $A(\theta,0) = E$ and there exists $u_{\rm max}$ that can be
equal to $+\infty$ such that $A(\theta,u_\text{max}) = \emptyset$. Also,
$(A(\theta,u_\text{max}-u))_{u\in[0,u_\text{max}]}$ is a covering collection
of $E$. Next, define
$$
u(\theta,\alpha)= \sup\left \{ u\in [0,u_{\rm max}] \text{ such that } %
  \mu_\theta(A(\theta,u) ) \geq 1-\alpha\right \}
$$
and
$$
K(\theta,\alpha) = A(\theta, u(\theta,\alpha)).
$$
We would like to construct a confidence region for $\theta^*$ from the
observation of $X\sim \mu_{\theta^*}$. If
\begin{equation}
  R_\alpha\left (X\right ) %
  = \left \{\theta\in \Theta \text{ such that } X\in K(\theta,\alpha)\right \}
\end{equation}
then
$$
\mathbb{P}\left (\theta^{*}\in R_\alpha\left (X\right )\right )%
=\mathbb{P}\left (X\in K(\theta^{*},\alpha)\right
)=\mu_{\theta^{*}}(K(\theta^{*},\alpha))\geq 1-\alpha.
$$
This shows that $R_\alpha(X)$ is a confidence region for $\theta^*$ with a
coverage of at least $1-\alpha$. Let clarify the expression of the confidence
region for the general multinomial case where
$\mathbf{X}\sim\mathcal{M}_d(n,\mathbf{p})$ with $\mathbf{p}\in\Lambda_d$ and
$d\geq2$. Here the value of $\mathbf{p}$ used for the observed data
$\mathbf{X}$ plays the role of $\theta^*$. We have $\Theta=\Lambda_d$,
$E=E_{d}$ as described by \eqref{eq:Emultinom},
$\mu_{\boldsymbol\theta}=\mathcal{M}_d(n,\boldsymbol{\theta})$, and $u_{\rm
  max}=1$. For every $\alpha\in(0,1)$, the confidence region given by the
level-sets method is expressed as in (\ref{eq:our-region}) given in the
introduction.

\subsubsection*{Optimality}

Let us focus on the case where $E$ is a finite set. The confidence region
constructed above is not optimal among the $1-\alpha$ conservative regions and
thus could be improved by a more detailed analysis. Let us first note that by
its very construction, for each $\theta\in \Theta$, $K(\theta,\alpha)$ is
minimal with respect to its cardinality that is, a set $B(\theta,\alpha)$ does
not exist so that $\mu_{\theta}(B(\theta,\alpha))\geq 1-\alpha$ and ${\rm
  card}(B(\theta,\alpha))<{\rm card} (K(\theta,\alpha))$. However, in some
circumstances, sets $L(\theta,\alpha)$ may exist with the same cardinality as
$K(\theta,\alpha)$ so that $\mu_{\theta}(K(\theta,\alpha))\geq
\mu_{\theta}(L(\theta,\alpha))\geq 1-\alpha$. The following theorem gives a
condition that allows conservative sets to be built but with a coverage closer
to $1-\alpha$ than the coverage of $R_\alpha\left (X\right )$. For all
$\alpha\in[0,1]$ and $\theta\in \Theta$, let us denote $\gamma(\theta,\alpha)
= 1-\mu_{\theta}\left (K(\theta,\alpha)\right )$ and let us note that
$\gamma(\theta,\alpha)\leq\alpha$.

\begin{thm}
  If for each $\theta\in \Theta$ there exists two subsets $V\left
    (\theta,\alpha\right )\subset K(\theta,\alpha)$ and $W\left
    (\theta,\alpha\right )\subset E\backslash K(\theta,\alpha)$ with the
  same cardinality so that
  $$
  \alpha-\gamma(\theta,\alpha) %
  \geq\mu_{\theta}\left (V (\theta,\alpha )\right )%
  -\mu_{\theta}\left (W (\theta,\alpha )\right )>0,
  $$
  then there exists a set $T_\alpha\left (X\right )\neq R_\alpha\left (X\right
  )$ so that
  $$
  1-\alpha%
  \leq\mathbb{P}\left(\theta^{*}\in T_\alpha\left (X\right )\right)%
  <\mathbb{P}\left (\theta^{*}\in R_\alpha\left (X\right )\right ).
  $$
\end{thm}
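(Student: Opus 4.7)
The plan is to modify the set $K(\theta,\alpha)$ by a measure-preserving-type swap: we remove $V(\theta,\alpha)$ from it and add $W(\theta,\alpha)$ in its place. Since $V$ and $W$ have the same cardinality but $\mu_\theta(V)>\mu_\theta(W)$, the new set has strictly smaller $\mu_\theta$-mass than $K(\theta,\alpha)$, yet the hypothesis $\mu_\theta(V)-\mu_\theta(W)\leq \alpha-\gamma(\theta,\alpha)$ is exactly what we need to keep the mass above $1-\alpha$. Inverting the family of these modified sets, as in Section~\ref{ss:level-sets}, then produces a confidence region $T_\alpha(X)$ strictly less conservative than $R_\alpha(X)$.

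Concretely, for each $\theta$ define
$$
\widetilde{K}(\theta,\alpha)
=\bigl(K(\theta,\alpha)\setminus V(\theta,\alpha)\bigr)\cup W(\theta,\alpha),
$$
which is well defined because $V\subset K$ and $W\subset E\setminus K$ are disjoint. The first step is to evaluate
$$
\mu_\theta(\widetilde{K}(\theta,\alpha))
=\mu_\theta(K(\theta,\alpha))-\mu_\theta(V(\theta,\alpha))+\mu_\theta(W(\theta,\alpha))
=1-\gamma(\theta,\alpha)-\bigl(\mu_\theta(V)-\mu_\theta(W)\bigr),
$$
and to use the two sides of the hypothesis: the upper bound $\mu_\theta(V)-\mu_\theta(W)\leq\alpha-\gamma(\theta,\alpha)$ yields $\mu_\theta(\widetilde{K}(\theta,\alpha))\geq 1-\alpha$, while strict positivity $\mu_\theta(V)-\mu_\theta(W)>0$ yields $\mu_\theta(\widetilde{K}(\theta,\alpha))<\mu_\theta(K(\theta,\alpha))$.

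Next, define by inversion, exactly as in Section~\ref{ss:level-sets},
$$
T_\alpha(X)=\{\theta\in\Theta\ \text{ such that }\ X\in\widetilde{K}(\theta,\alpha)\}.
$$
Then the same one-line calculation used for $R_\alpha$ gives
$$
\mathbb{P}(\theta^*\in T_\alpha(X))
=\mathbb{P}(X\in\widetilde{K}(\theta^*,\alpha))
=\mu_{\theta^*}(\widetilde{K}(\theta^*,\alpha)),
$$
so the first step of the computation above shows $\mathbb{P}(\theta^*\in T_\alpha(X))\geq 1-\alpha$, and the second step shows
$$
\mathbb{P}(\theta^*\in T_\alpha(X))
=\mu_{\theta^*}(\widetilde{K}(\theta^*,\alpha))
<\mu_{\theta^*}(K(\theta^*,\alpha))
=\mathbb{P}(\theta^*\in R_\alpha(X)).
$$
Finally, to see that $T_\alpha\neq R_\alpha$ as random regions, note that since $\mu_{\theta^*}(V)>\mu_{\theta^*}(W)\geq 0$ the set $V(\theta^*,\alpha)$ is nonempty, so any $x\in V(\theta^*,\alpha)$ witnesses $\theta^*\in R_\alpha(x)\setminus T_\alpha(x)$.

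The construction itself is straightforward; the only point requiring care is the bookkeeping around the inequality $\mu_\theta(V)-\mu_\theta(W)\leq \alpha-\gamma(\theta,\alpha)$, which is precisely the slack between the actual coverage $1-\gamma(\theta,\alpha)$ of $K$ and the target $1-\alpha$. The conceptual content of the theorem is simply that this slack is the exact budget available for cardinality-preserving swaps; the ``obstacle'', such as it is, lies not in the proof but in verifying the hypothesis in concrete examples, which in general requires a finer combinatorial understanding of how the level-set families $\{A(\theta,u)\}_u$ fit together as $\theta$ varies.
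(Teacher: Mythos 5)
Your proof is correct and follows essentially the same route as the paper: you form the swapped set $\bigl(K(\theta,\alpha)\setminus V(\theta,\alpha)\bigr)\cup W(\theta,\alpha)$ (the paper's $L(\theta,\alpha)$), bound its mass using the two sides of the hypothesis, and invert to get $T_\alpha(X)$. Your explicit remark that $V(\theta^*,\alpha)\neq\emptyset$ forces $T_\alpha\neq R_\alpha$ is a small welcome addition the paper leaves implicit.
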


\begin{proof}
  Let us consider the set $L\left (\theta,\alpha\right )=K\left
    (\theta,\alpha\right )\backslash V\left (\theta,\alpha\right )\bigcup
  W\left (\theta,\alpha\right )$ and note that thanks to the conditions
  imposed for the sets $V$ and $W$ we have for all $\theta\in \Theta$,
  $$
  1-\alpha\leq\mu_{\theta}\left (L(\theta,\alpha)\right )<\mu_{\theta}\left
    (K(\theta,\alpha)\right ).
  $$
  Now, with $T_\alpha\left(X\right )=\left\{\theta\in \Theta;X\in
    L(\theta,\alpha)\right \}$ we have
  \begin{equation*}
    \begin{array}{ll}
      \mathbb{P}\left (\theta^{*}\in T_\alpha\left (X\right )\right )&= \mathbb{P}\left (X\in L(\theta^{*},\alpha)\right )\\
      {}&=\mathbb{P}\left (X\in K\left (\theta^{*},\alpha\right )\setminus V\left (\theta^{*},\alpha\right )\bigcup W\left (\theta^{*},\alpha\right )\right )\\
      {}&=1-\gamma\left (\theta^{*},\alpha\right )-\mu_{\theta^{*}}\left ( V\left (\theta^{*},\alpha\right )\right )+\mu_{\theta^{*}}\left ( W\left (\theta^{*},\alpha\right )\right )\\
      {}&\leq 1-\gamma\left (\theta^{*},\alpha\right ).\\
    \end{array}
  \end{equation*}
  On the other hand, we have already seen that for all $\theta\in \Theta$,
  $$
  1-\alpha\leq\mu_{\theta}\left (L(\theta,\alpha)\right ).
  $$
  This last inequality holds true when $\theta=\theta^{*}$ and thus
  $$
  1-\alpha\leq\mu_{\theta^{*}}\left (L(\theta^{*},\alpha)\right )=\mathbb{P}\left
    (\theta^{*}\in T_\alpha\left (X\right )\right ).
  $$
\end{proof}

This theorem can be used to build less conservative confidence sets than
$R_{\alpha}(X)$. A convenient way to proceed is to take
$V(\theta,\alpha)=\{y\}$ where $y$ is such that
$$
\mu_{\theta}(y)=\min_{z\in
  K\left (\theta,\alpha\right )}\mu_{\theta}(z)
$$
and to iteratively try several sets $W^{k}$ as follows. Set
$W^{0}(\theta,\alpha)=\emptyset$, and at iteration $k\geq 1$, set
$W^{k}(\theta,\alpha)=\{w_{k}\}$ and $L^{k}(\theta,\alpha)=K\left
  (\theta,\alpha\right )\setminus V\left (\theta,\alpha\right )\bigcup
W^{k}\left (\theta,\alpha\right )$ where
$$w_{k}=\arg\max_{z\in L^{k-1}(\theta,\alpha) }\mu_{\theta}(z).$$
This process is iterated until the set $L^{k}(\theta,\alpha)$ is such that
$\mu_{\theta}\left (L^{k}(\theta,\alpha)\right )-(1-\alpha)$ is non-negative
and minimum.

Since for $\theta\in \Theta$ there may exist $x\neq y$ with
$\mu_{\theta}(x)=\mu_{\theta}(y)$, there also may exist several sets
$(L^{i}(\theta,\alpha))_{i}$ which have the same mass
$\mu_{\theta}(L^{i}(\theta,\alpha))=1-\delta(\theta,\alpha).$ Several
confidence sets with the same coverage can thus be derived using these sets. A
simple way to choose between these concurrent confidence sets is to adopt the
one that optimizes a criterion such as having a minimum volume (for the
Lebesgue measure).


\subsection{The Clopper-Pearson regions}
\label{ss:clopper-pearson}

Consider the binomial case $d=2$ for which $\mathbf{p}=(p_1,1-p_1)$. The well
known Clopper-Pearson interval for $p_1$ relies on the exact distribution of
$X_1$ in the binomial case \cite{clopper-pearson, julious, chen}. It was
considered for a long time as outstanding. This interval $[L, U]$ is given by
\begin{equation}\label{eq:clopper_limits}
  \begin{cases}
    L&=\inf \left \{ \theta\in[0,1] \text{ such that }
      \sum_{i=x_{1}}^n\binom{n}{i}\theta^{i}(1-\theta)^{n-i} \geq \frac{1}{2}\alpha\right
    \} \\
    U&=\sup \left \{ \theta\in[0,1] \text{ such that }
      \sum_{i=0}^{x_{1}}\binom{n}{i}\theta^{i}(1-\theta)^{n-i} \geq
      \frac{1}{2}\alpha\right \}.
  \end{cases}
\end{equation}

It has been shown that the Clopper-Pearson interval is often conservative.
Also, some continuity corrections have been proposed, and give the so called
``mid-p interval'', see \cite{berry} for a review. This trick reduces the
staircase effect but the coverage probability can be less than $1-\alpha$. The
Beta-Binomial correspondence (see Lemma \ref{le:Beta-Binom} below) shows that
the left and right limits $L$ and $R$ of the Clopper-Pearson confidence
interval \eqref{eq:clopper_limits} are the $\frac{1}{2}\alpha$ and
$(1-\frac{1}{2}\alpha)$ quantiles of the Beta distribution
$\mathrm{Beta}\left(X_1;n-X_1+1\right)$.

\begin{lemma}[Beta-Binomial correspondence]\label{le:Beta-Binom}
  If $X\sim\mathrm{Binom}(n,p_1)$ with $p_1\in[0,1]$ and $0\leq k\leq n$ and
  $B\sim\mathrm{Beta}(k,n-k+1)$ then the following identity holds true.
  \begin{equation}\label{eq:binbeta}
    \mathbb{P}(X\geq k)=\mathbb{P}(B\leq p_1).
  \end{equation}
\end{lemma}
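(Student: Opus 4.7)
The plan is to construct both sides of \eqref{eq:binbeta} on a common probability space using iid uniforms, which reduces the identity to the classical distribution of a uniform order statistic. Let $U_1,\ldots,U_n$ be iid uniform on $[0,1]$, and let $U_{(1)}\leq\cdots\leq U_{(n)}$ denote their order statistics.

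First I would observe that $X':=\sum_{i=1}^n\mathbf{1}_{\{U_i\leq p_1\}}$ is a sum of $n$ iid Bernoulli$(p_1)$ variables, hence $X'\sim\mathrm{Binom}(n,p_1)$; since \eqref{eq:binbeta} only involves the law of $X$, we may replace $X$ by $X'$. Then the event $\{X'\geq k\}$ says that at least $k$ of the $U_i$'s lie below $p_1$, which by definition of the order statistics is exactly the event $\{U_{(k)}\leq p_1\}$. This already yields $\mathbb{P}(X\geq k)=\mathbb{P}(U_{(k)}\leq p_1)$.

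The remaining step is to identify the distribution of $U_{(k)}$. A direct density computation — e.g.\ by a multinomial counting of the configurations in which one $U_i$ lies in $[t,t+dt]$, exactly $k-1$ of them lie below $t$, and $n-k$ lie above — gives density $\frac{n!}{(k-1)!(n-k)!}t^{k-1}(1-t)^{n-k}$ on $[0,1]$, so $U_{(k)}\sim\mathrm{Beta}(k,n-k+1)$. Combining this with the previous display gives $\mathbb{P}(X\geq k)=\mathbb{P}(B\leq p_1)$.

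There is no genuine obstacle — each step is classical — but for a purely analytic alternative one could differentiate $F(p):=\sum_{i=k}^n\binom{n}{i}p^i(1-p)^{n-i}$ and $G(p):=\mathbb{P}(B\leq p)$ with respect to $p$: a telescoping cancellation collapses $F'(p)$ to $\frac{n!}{(k-1)!(n-k)!}p^{k-1}(1-p)^{n-k}$, which matches the $\mathrm{Beta}(k,n-k+1)$ density $G'(p)$, and $F(0)=G(0)=0$ for $k\geq 1$ closes the argument. The coupling proof is cleaner and is the one I would write down.
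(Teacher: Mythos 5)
Your proposal is correct and is essentially identical to the paper's proof: both couple the binomial and the Beta variable through iid uniforms, identify $\{X\geq k\}$ with $\{U_{(k)}\leq p_1\}$, and invoke $U_{(k)}\sim\mathrm{Beta}(k,n-k+1)$. The analytic telescoping argument you mention as an alternative does not appear in the paper, but since you adopt the coupling proof there is nothing further to compare.
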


\begin{proof}
  We briefly recall here the classical proof (see \cite[page 68]{MR1712750}).
  Let $U_1,\ldots,U_n$ be iid uniform random variables on $[0,1]$ and
  $U_{(1)}\leq\cdots\leq U_{(n)}$ be the reordered sequence. If we define
  $V_{p_1}=\sum_{i=1}^n\mathrm{I}_{\{U_i\leq p_1\}}$ then
  $V_{p_1}\sim\mathrm{Binom}(n,p_1)$ and $U_{(k)}\sim\mathrm{Beta}(k,n-k+1)$
  and for every $1\leq k\leq n$, $V_{p_1} \geq k$ if and only if $U_{(k)}\leq
  p_1$.
\end{proof}

The confidence interval obtained by the level-sets method does not coincide
with the classical Clopper-Pearson confidence interval. Let us show why the
Clopper-Pearson confidence interval can be considered as a special case of the
method based on covering collections. Recall that we are in the case where
$d=2$ and $X_1\sim\mathrm{Binom}(n,p_1)$ for some unknown $p_1\in[0,1]$. This
can also be written $(X_1,n-X_1)\sim\mathcal{M}_2(n,(p_1,1-p_1))$. The
unidimensional nature of $E=\{0,\ldots,n\}$ suggests the following two
covering collections $(A_k^1)_{k\in E}$ and $(A_k^2)_{k\in E}$ defined by
$A_{0}^{1} =\emptyset$ and $A_{0}^{2} =\emptyset$, and for every $0\leq k\leq
n$,
$$
A^1_{k+1}=\{0,\ldots,k\}%
\quad\text{and}\quad
A^2_{k+1}=\{n-k,\ldots,n\}.
$$
Here $\mathcal{K}=E$ for both the top-to-bottom and bottom-to-top sequences.
The bottom-to-top sequence $(A^{1}_{k})_{k\in E}$ leads to a $(1-\alpha)$
one-sided confidence interval for $p_1$ given by
\begin{equation}\label{eq:CP-U}
  R^{1}_\alpha\left (X_1\right ) %
  = \left \{\theta\in [0,1] \text{ such that} %
    \sum_{i=0}^{X_1}\binom{n}{i}\theta^{i}(1-\theta)^{n-i} \geq \alpha\right\} %
  = [0,U_{\alpha}(X_1)]
\end{equation}
where
$$
U_{\alpha}(x) %
=\sup \left \{ \theta\in[0,1] \text{ such that} %
  \sum_{i=0}^{x}\binom{n}{i}\theta^{i}(1-\theta)^{n-i} \geq \alpha\right \}.
$$
On the other hand, the top-to-bottom covering collection $(A^{2}_k)_{k\in E}$
leads to a $(1-\alpha)$ confidence interval of $p_1$ given by
\begin{equation}\label{eq:CP-L}
  R^{2}_\alpha\left (X_1\right ) %
  = \left \{\theta\in [0,1] \text{ such that} %
    \sum_{i=X_1}^n\binom{n}{i}\theta^{i}(1-\theta)^{n-i}\geq \alpha\right\} %
  = [L_{\alpha}(X_1);1]
\end{equation}
where
$$
L_{\alpha}(x) %
=\sup \left \{ \theta\in[0,1] \text{ such that} %
  \sum_{i=x}^{n}\binom{n}{i}\theta^{i}(1-\theta)^{n-i} \geq \alpha\right \}.
$$
By virtue of Lemma \ref{le:symmetrization}, we can combine the one-sided
confidence intervals \eqref{eq:CP-U} and \eqref{eq:CP-L} in order to obtain a
two-sided $(1-\alpha)$ confidence interval of $p_1$, which is the two-sided
interval
$$
R^{1}_{\frac{1}{2}\alpha}\left (X_1\right ) %
\bigcap R^{2}_{\frac{1}{2}\alpha}\left (X_1\right ) %
=[L_{\frac{1}{2}\alpha}(X_1);U_{\frac{1}{2}\alpha}(X_1)].
$$
We recognize the Clopper-Pearson interval \eqref{eq:clopper_limits}. The
discrete nature of $E$ precludes the construction of a confidence interval of
$p_{1}$ with coverage exactly equal to $1-\alpha$. Actually, the
Clopper-Pearson interval is not exactly symmetric and there is no guaranty
that
$$
\mathbb{P}\left(p< L_{\frac{1}{2}\alpha}(X_1)\right) %
= %
\mathbb{P}\left(p> U_{\frac{1}{2}\alpha}(X_1)\right).
$$
Our construction via a covering collection immediately provides an extension
of the Clopper-Pearson interval in the general multinomial case where
$\mathbf{X}\sim\mathcal{M}_d(n,\mathbf{p})$ with $\mathbf{p}\in\Lambda_d$ and
$d>2$. This construction consists of labeling the elements of $E_d$ (note that
$\mathrm{Card}(E_{d})=\binom{n+d-1}{d-1}$) and constructing the covering
collection $(A_{k})_{k\in\mathcal{K}}$ which grows by adding the points one
after the other. The choice of the total order on $E_d$ is arbitrary when
$d>2$. Some additional constraints can help to reduce this choice. As
advocated by Casella \cite{casella} for the binomial distribution, the
proposed confidence region $R_{\alpha}\left (X\right )$ should be
\emph{equivariant}, that is not sensitive to the order chosen to label the $d$
categories of the multinomial distribution.

\begin{defi}[Equivariance]
  A confidence region $R_\alpha(X)$ is \emph{equivariant} when
  \begin{equation}\label{eq:equivariance}
    \mathbb{P}\left(\sigma(\theta^{*})\in R_\alpha\left(\sigma(X)\right)\right)%
    =\mathbb{P}\left(\theta^{*}\in R_\alpha\left(X\right )\right)
  \end{equation}
  for every permutation $\sigma$ of $\{1,\ldots,d\}$. In other words, if and
  only if
  $$
  \sigma\left (R_\alpha\left (X\right )\right ) %
  =R_\alpha\left (\sigma(X)\right ).
  $$
\end{defi}

The following lemma gives a criterion of equivariance for covering
collections.

\begin{thm}[Equivariance criterion for covering collections]
  The confidence region $R_\alpha(\mathbf{X})$ constructed from a covering
  collection $(A_k)_{k\in\mathcal{K}}$ is equivariant if and only if $A_k$ is
  invariant by permutation of coordinates for every $k\in\mathcal{K}$.
\end{thm}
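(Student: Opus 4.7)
The plan is to unpack the definition of $R_\alpha(X)$ in \eqref{eq:conf_int} and rely throughout on the fundamental symmetry of the multinomial: for any permutation $\sigma$ of $\{1,\ldots,d\}$ acting by coordinate permutation on both $\Lambda_d$ and $E_d$, the explicit product formula for $\mu_{\mathbf{p}}(k)$ yields $\mu_{\sigma(\mathbf{p})}(\sigma(k))=\mu_{\mathbf{p}}(k)$; summing over a subset then gives $\mu_\theta(\sigma(B))=\mu_{\sigma^{-1}(\theta)}(B)$ for every $B\subset E_d$. This will be the only nontrivial ingredient feeding both directions.

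For sufficiency, I assume $\sigma(A_k)=A_k$ for every $k\in\mathcal{K}$. Then $\sigma(X)\in A_k\iff X\in A_k$, so $k_{\sigma(X)}=k_X$ and $A_{k_{\sigma(X)}}=A_{k_X}=\sigma(A_{k_X})$. A direct substitution $\theta=\sigma(\vartheta)$ then gives
\begin{align*}
\sigma(R_\alpha(X))
&=\{\sigma(\vartheta):\mu_\vartheta(A_{k_X})\geq\alpha\}\\
&=\{\theta:\mu_{\sigma^{-1}(\theta)}(A_{k_X})\geq\alpha\}\\
&=\{\theta:\mu_\theta(\sigma(A_{k_X}))\geq\alpha\}\\
&=\{\theta:\mu_\theta(A_{k_{\sigma(X)}})\geq\alpha\}\\
&=R_\alpha(\sigma(X)),
\end{align*}
where the third line uses the multinomial symmetry and the fourth uses invariance of $A_{k_X}$.

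For necessity I argue contrapositively. Suppose some $A_{k_0}$ is not $\sigma$-invariant; after possibly swapping $\sigma$ with $\sigma^{-1}$, choose $x\in A_{k_0}$ with $\sigma(x)\notin A_{k_0}$, and set $X=x$. Then $k_X\leq k_0<k_{\sigma(X)}$, so $A_{k_X}\subsetneq A_{k_{\sigma(X)}}$. Equivariance demands $\{\theta:\mu_\theta(\sigma(A_{k_X}))\geq\alpha\}=\{\theta:\mu_\theta(A_{k_{\sigma(X)}})\geq\alpha\}$ as subsets of $\Lambda_d$. Because the monomials $\mathbf{p}\mapsto p_1^{k_1}\cdots p_d^{k_d}$ indexed by $k\in E_d$ are linearly independent on $\Lambda_d$, the polynomial $\theta\mapsto\mu_\theta(\sigma(A_{k_X}))-\mu_\theta(A_{k_{\sigma(X)}})$ is nonzero unless $\sigma(A_{k_X})=A_{k_{\sigma(X)}}$, so its zero locus has empty interior and the two superlevel sets must differ on a set of positive Lebesgue measure, contradicting equivariance. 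Letting $k_0$ range and iterating this forces $\sigma(A_k)=A_k$ for every $k$.

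The main obstacle is the necessity step: bridging ``the superlevel sets coincide in $\Lambda_d$'' and ``the underlying subsets of $E_d$ coincide.'' Sufficiency is a mechanical calculation once the multinomial symmetry is noted, whereas necessity must exclude accidental coincidences of $\alpha$-level sets of distinct polynomials $\mu_\theta(A)$ and $\mu_\theta(B)$. The linear independence of the multinomial monomials provides the clean route, by turning equality of level sets into equality of polynomials and then into equality of sets.
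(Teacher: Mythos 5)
Your sufficiency argument is correct and is essentially the paper's own proof: the paper proves only the ``if'' direction, using the same two ingredients you use, namely $k_{\sigma(X)}=k_X$ when every $A_k$ is permutation-invariant, and the identity $\mu_{\sigma^{-1}(\theta)}(A_k)=\mu_\theta(\sigma(A_k))=\mu_\theta(A_k)$ coming from $\mu_{\sigma(\mathbf{p})}(\sigma(k))=\mu_{\mathbf{p}}(k)$. That half of your write-up needs no changes.

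The necessity half, which the paper does not in fact prove, contains a genuine gap at its last step. From $\mathrm{card}\bigl(\sigma(A_{k_X})\bigr)<\mathrm{card}\bigl(A_{k_{\sigma(X)}}\bigr)$ you correctly conclude, via linear independence of the degree-$n$ monomials on $\Lambda_d$, that $P(\theta)=\mu_\theta(\sigma(A_{k_X}))$ and $Q(\theta)=\mu_\theta(A_{k_{\sigma(X)}})$ are distinct polynomials, so $P-Q$ vanishes only on a set with empty interior. But ``$P\neq Q$ off a negligible set'' does not imply ``$\{P\geq\alpha\}\neq\{Q\geq\alpha\}$'': two distinct polynomials can have identical superlevel sets at a fixed level $\alpha$, for instance when both exceed $\alpha$ everywhere. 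Concretely, on $\Lambda_2$ with $n=2$ take $P(\theta)=\theta_1^2+\theta_2^2\geq\frac{1}{2}$ and $Q\equiv 1$; for $\alpha=0.3$ both superlevel sets equal $\Lambda_2$ although $P-Q$ vanishes only at the two vertices. Knowing where $P-Q$ is nonzero says nothing about where each of $P$ and $Q$ sits relative to the threshold $\alpha$, so the claimed contradiction with equivariance is not obtained. The gap can be closed if the statement is read as quantified over all $\alpha\in(0,1)$: equality of \emph{all} the superlevel sets of $P$ and $Q$ forces $P=Q$ pointwise, and then your linear-independence argument yields $\sigma(A_{k_X})=A_{k_{\sigma(X)}}$, hence $k_X=k_{\sigma(X)}$ by strict monotonicity of the collection, hence invariance of each attained $A_k$ (and every $k>\min\mathcal{K}$ is attained, since each step adds at least one new point). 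For a single fixed $\alpha$, however, the implication you assert is false as a general statement about polynomials, and a correct proof would have to exploit the specific structure of the sets involved (for example by testing $\theta$ at vertices and faces of $\Lambda_d$, where $\mu_\theta$ localizes); as written, the ``only if'' direction is not established.
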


\begin{proof}
  Let $\sigma$ be a permutation of $\{1,\ldots,d\}$,
  $\mathbf{i}=(i_{1},\ldots,i_{d})\in E$, and for every $\theta\in\Theta$,
  $$
  \sigma(\theta)=\left(\theta_{\sigma(1)},\ldots, \theta_{\sigma(d)}\right) %
  \quad\text{and}\quad %
  \sigma(\mathbf{i})=\left (i_{\sigma(1)},\ldots, i_{\sigma(d)}\right ).
  $$
  By invariance of $A_k$ by permutation, we have $\mathbf{X}\in
  A_{k}\Leftrightarrow \mathbf{X}\in \sigma(A_{k} )$ and thus
  $k_{\mathbf{X}}=k_{\sigma(\mathbf{X})}$. If $\theta\in\sigma\left
    (R_\alpha\left (\mathbf{X}\right )\right )$ then
  $\mu_{\sigma^{-1}(\theta)}(A_{k_{\mathbf{X}}})\geq\alpha$. But, for every
  $\mathbf{i}\in E$,
  $$
  \mu_{\sigma^{-1}\theta)}(\{\mathbf{i}\}) =\mu_{\theta}(\{\sigma(\mathbf{i})\}).
  $$
  If $A_{k}$ is invariant by permutations, then for every $\mathbf{i}\in A_k$,
  we have $\sigma(\mathbf{i})\in A_{k}$ and consequently
  $$
  \mu_{\sigma^{-1}(\theta)}(A_k) %
  =\mu_{\theta}(\sigma(A_k)) %
  =\mu_{\theta}(A_k).
  $$
  Thus, $\theta\in\sigma\left (R_\alpha\left (\mathbf{X}\right )\right )$ if
  and only if $\mu_{\theta}(A_{k_{\mathbf{X}}})=\mu_{\theta}(A_{k_{\sigma
      (\mathbf{X})}})\geq\alpha$, that is $\theta\in R_\alpha\left
    (\sigma(\mathbf{X})\right )$.
\end{proof}

Equivariance imposes a strong constraint on the covering collection. A large
set $A_{k_{\mathbf{X}}}$ gives a large confidence region. Since confidence
regions with small volume are desirable, it is interesting, when $E$ is
discrete, to consider a covering collection $(A_k)_{k\in\mathcal{K}}$ which
grows by adding the points of $E$ one after the other. Unfortunately, this
method of construction is not compatible with equivariance: the $A_k$ cannot
be invariant by permutations of coordinates. A weaker condition consists of
the existence of a subsequence $(A_{k_{l}})_{l}$ that is invariant by
permutation of coordinates. An example of such a sequence for $d=3$ is given
in Figure \ref{fig:crob0}.

Recall that when $d=2$, the Beta-Binomial correspondence stated in Lemma
\ref{le:Beta-Binom} provides a clear link between the quantiles of the Beta
distribution and the Clopper-Pearson confidence interval. In fact, this can be
seen as a special case of the Dirichlet-Multinomial correspondence valid for
any $d\geq3$ as stated in the following lemma. This makes a link between
Clopper-Pearson regions and Bayesian regions constructed with a Jeffrey prior
(see for instance \cite{MR1835885}). However, the notion of coverage that we
use in the present article is purely frequentist and does not fit with the
Bayesian paradigm without serious distortions.
 
\begin{lemma}[Dirichlet-Multinomial correspondence]
  Let $\mathbf{p}\in\Lambda_d$ and $k_0,k_1,\ldots,k_d$ be such that
  $k_0=0\leq k_1\leq\cdots\leq k_{d-1}\leq n\leq k_{d}=n+1$. If
  $$
  \mathbf{X}\sim\mathcal{M}_d(n,\mathbf{p}) %
  \quad\text{and}\quad %
  \mathbf{D}\sim\mathrm{Dirichlet}_d(k_1-k_0,k_2-k_1,\ldots,k_d-k_{d-1})
  $$
  then the following identity holds true:
  \begin{multline}\label{eq:multinomdiric}
    \mathbb{P}(X_1\geq k_1,X_1+X_2\geq k_2,\ldots,X_1+\cdots+X_{d-1}%
    \geq  k_{d-1})\\
    =\mathbb{P}(D_1\leq p_1,D_1+D_2\leq p_2,\ldots,D_1+\cdots+D_{d-1}%
    \leq p_{d-1}).
  \end{multline}
\end{lemma}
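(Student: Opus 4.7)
The plan is to extend the coupling used in the proof of Lemma~\ref{le:Beta-Binom} from a single order statistic to a vector of them, reading the scalars $p_j$ on the right of~\eqref{eq:multinomdiric} as the cumulative probabilities $p_1+\cdots+p_j$ (the specialization $d=2$ recovers exactly the Beta-Binomial statement). Let $U_1,\ldots,U_n$ be i.i.d.\ uniform on $[0,1]$ with order statistics $U_{(1)}\leq\cdots\leq U_{(n)}$, augmented by $U_{(0)}:=0$ and $U_{(n+1)}:=1$, and put $Q_j:=p_1+\cdots+p_j$ for $0\leq j\leq d$, so that $0=Q_0\leq Q_1\leq\cdots\leq Q_d=1$.

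First I would realize both laws on this single probability space. Setting $X_j:=\#\{1\leq i\leq n:\ Q_{j-1}<U_i\leq Q_j\}$ gives $(X_1,\ldots,X_d)\sim\mathcal{M}_d(n,\mathbf{p})$, and the partial sum $X_1+\cdots+X_j=\#\{i:U_i\leq Q_j\}$ satisfies $X_1+\cdots+X_j\geq k_j$ if and only if $U_{(k_j)}\leq Q_j$. On the Dirichlet side, setting $D_j:=U_{(k_j)}-U_{(k_{j-1})}$ for $1\leq j\leq d$ gives $D_1+\cdots+D_j=U_{(k_j)}$, so the event on the right of~\eqref{eq:multinomdiric} equals $\{U_{(k_j)}\leq Q_j,\ 1\leq j\leq d-1\}$, which is the event on the left. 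Both probabilities therefore coincide.

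The remaining step is to identify the law of $(D_1,\ldots,D_d)$. Starting from the joint density $n!\,\mathbf{1}_{\{0\leq u_1\leq\cdots\leq u_n\leq 1\}}$ of $(U_{(1)},\ldots,U_{(n)})$, I would integrate out the $n-(d-1)$ coordinates whose indices lie strictly between consecutive $k_j$'s---each gap of length $k_j-k_{j-1}-1$ contributing, via the standard simplex volume integral, a factor proportional to $d_j^{k_j-k_{j-1}-1}$---and then apply the change of variables $d_j=u_{k_j}-u_{k_{j-1}}$, which has unit Jacobian and sends the ordering constraint to the simplex $\{d_j\geq 0,\ d_1+\cdots+d_d=1\}$. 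Collecting the factorial prefactors yields exactly the density of $\mathrm{Dirichlet}_d(k_1-k_0,\ldots,k_d-k_{d-1})$. Degenerate blocks with $k_{j-1}=k_j$ contribute $D_j=0$ almost surely, consistent with the zero Dirichlet parameter.

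The main (and essentially only) obstacle is this spacings-to-Dirichlet calculation, which is classical; at $d=2$ it reduces to $U_{(k)}\sim\mathrm{Beta}(k,n-k+1)$, the ingredient already used in Lemma~\ref{le:Beta-Binom}. The new content is thus mostly combinatorial bookkeeping for handling several order statistics simultaneously.
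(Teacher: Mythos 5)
Your proof is correct and follows essentially the same route as the paper's: both realize the multinomial via counts of i.i.d.\ uniforms falling in adjacent intervals of lengths $p_1,\ldots,p_d$, translate the left-hand event into $\{U_{(k_j)}\leq p_1+\cdots+p_j,\ 1\leq j\leq d-1\}$, and identify $D_j$ with the spacing $U_{(k_j)}-U_{(k_{j-1})}$ (including your correct cumulative reading of the $p_j$ on the right-hand side). The only difference is the final step, where you compute the density of the block spacings directly by integrating out intermediate order statistics, whereas the paper invokes that $(U_{(1)}-U_{(0)},\ldots,U_{(n+1)}-U_{(n)})\sim\mathrm{Dirichlet}_{n+1}(1,\ldots,1)$ and concludes by block-sum stability of Dirichlet laws---a slightly slicker route to the same classical fact.
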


\begin{proof}
  The proof is a direct extension of the Beta-Binomial case given by Lemma
  \ref{le:Beta-Binom}. Let $I_1,\ldots,I_d$ be the sequence of adjacent
  sub-intervals of $[0,1]$ of respective lengths $p_1,\ldots,p_d$,
  $U_1,\ldots,U_n$ be iid uniform random variables on $[0,1]$ and
  $U_{(1)}\leq\cdots\leq U_{(n)}$ be the reordered sequence. For any $1\leq
  r\leq d$, let us define
  $$
  V_{p,r}= \sum_{i=1}^n \mathrm{I}_{\{U_i\in \mathrm{I}_r\}} %
  =\mathrm{Card}\{1\leq i\leq n\ \text{ such that }\ U_i\in\mathrm{I}_r\}.
  $$
  We have $\mathbf{V_p}=(V_{p,1},\ldots,V_{p,r})\sim\mathcal{M}_d(n,\mathbf{p})$. Now,
  for every $0\leq k_1 \leq\cdots \leq k_{d-1}\leq n$,
  $$
  V_{p,1}\geq k_1,\ldots,V_{p,1}+\cdots+V_{p,d-1}\geq k_{d-1} \quad\text{iff}
  \quad U_{(k_1)}\leq p_1,\ldots,U_{(k_{d-1})}\leq p_1+\cdots+p_{d-1}.
  $$
  But by using the notation $U_{(0)}=0$ and $U_{(n+1)}=1$, we have
  $$
  (U_{(1)}-U_{(0)},\ldots,U_{(n+1)}-U_{(n)})
  \sim\mathrm{Dirichlet}_{n+1}(1,\ldots,1).
  $$
  and therefore, by the stability of Dirichlet laws by sum of blocks, with
  $k_0=0$ and $k_d=n+1$,
  $$
  (U_{(k_1)}-U_{(k_0)},\ldots,U_{(k_{d})}-U_{(k_{d-1})})
  \sim\mathrm{Dirichlet}_d(k_1,k_2-k_1,\ldots,k_d-k_{d-1}).
  $$
\end{proof}

\section{Comparisons and examples}
\label{se:comp-expl}

Recall that for every fixed $d\geq2$, $n\geq0$, and $\mathbf{p}\in\Lambda_d$,
a confidence region obtained from $\mathbf{X}\sim\mathcal{M}(n,\mathbf{p})$
provides a single coverage probability and a distribution of volumes. In this
section, we use coverage probabilities and mean volumes to compare the
performance of our level-set method with other methods, in the case where
$d\in\{2,3\}$ and $n\in\{5,10,20,30\}$. We also give two concrete examples,
one for $d=3$ and another one for $d=4$ in relation to the $\chi^2$-test. It
turns out that the regions obtained by the Clopper-Pearson method and its
multinomial extension have non-competitive volumes so we decided to ignore
them in the comparisons.

\subsection{Performances in the binomial case  ($d=2$)}
\label{ss:binomial}

In the binomial case $d=2$, a confidence region for $\mathbf{p}=(p_1,1-p_1)$
is actually a confidence interval for $p_1$. It is well known that the Wald
interval constructed from the Central Limit Theorem has poor coverage even
when $n$ is large but finite \cite{brown2}. It is also widely accepted that
the Wilson score interval \cite{wilson,brown2} or the Blyth-Still interval
\cite{blyth1} should be preferred to the Wald interval. We therefore compared
the performances of the $95\%$-intervals provided by the level-sets method,
the score method, and the Blyth-Still method. We computed the coverages and
the mean widths of the intervals obtained with each method for
$n\in\{5,10,20,30\}$ and for all $p_1\in[0;0.5].$ The results are represented
in figures \ref{fig:crob1} and \ref{fig:crob2} respectively. We can see that
for some values of $p_1$, the coverage of the score method is smaller than the
prescribed level of $0.95$, whereas the coverage of the Blyth-Still interval
and the level-set interval are always greater than or equal to this prescribed
level $0.95$. The coverages obtained with the level-set method are always
closer to the prescribed level except for $n=20,\ p_1\in[0.45,0.48]$ and
$n=30,\ p_1\in[0.38,0.42]$. The differences between the coverages of these
three methods decrease with $n$.

Figures \ref{fig:crob1} and \ref{fig:crob2} show that the score method
provides intervals with excellent mean width but fails to control the
coverage. The level-set method gives intervals that have a slightly narrower
mean width than the one obtained with the Blyth-Still method. This suggests
that the level-set method provides an excellent alternative to the Blyth-Still
method. Moreover, and in contrast to the Blyth-Still method, the level-set
method can still be used when $d>2$.

\subsection{Performances in the trinomial case ($d=3$)}
\label{ss:trinomial}

To our knowledge, the Blyth-Still method has no counterpart for $d>2$. In
addition, the regions obtained by the extended Clopper-Pearson method have
non-competitive volumes. We therefore decided to compare the level-set method
with the natural multidimensional extension of the Wilson score method. We
computed for $d=3$ the coverage probabilities and the mean volumes of the
$95\%$-regions obtained with both methods, for $n\in\{5,10,20\}$. Note that
for the score method, only the trace over $\Lambda_3$ of the regions is used
to compute the volume. The graphics in Figure \ref{fig:crob3} show the
coverage of both methods as well as the difference between their mean volumes.
Whatever the sample size, the coverage of the level-set regions is very close
to $1-\alpha=0.95$. In contrast, the coverages of the score regions can be
much lower than $0.95$. Surprisingly and in contrast with the binomial case
($d=2$), the level-set method here provides confidence regions with mean
volumes that (for $n=5$) are comparable to or smaller than their score's
counterparts! We believe that this because we measure the performance by the
mean volume. The level-set method appears thus to be a reasonable way to build
small confidence sets.

\subsection{Concrete example of the trinomial case ($d=3$)}
\label{ss:ex1}

The present example concerns antibiotics efficacy. A traditional way to
evaluate whether or not an antibiotic can be used for a specific pathogen is
to perform a ``susceptibility testing''. In such an experiment, different
isolates of a given pathogen are classified as ``Sensible'', ``Intermediate''
or ``Resistant'' according to the antibiotics ability to stop their growth.
Here, ten different isolates of Escherichia coli were tested with ampicillin.
The following results were obtained : 8 isolates were Sensible, 2 Intermediate
and 0 were Resistant. The count $\mathbf{x}=(8,2,0)$ can be seen as the
realization of $\mathbf{X}\sim\mathcal{M}(10,\mathbf{p})$ where
$\mathbf{p}=(p_{1},p_{2},p_{3})$ denotes the probability of a given isolate
belonging to each of the different classes. We calculated a $95\%$-confidence
region of $\mathbf{p}$ using the level-set method (Figure \ref{fi:ex1}). This
region suggests that even if none of the $10$ tested isolates was observed to
be resistant, up to $30\%$ of resistant and $20\%$ of intermediate isolates
will be still possible. This confidence region does not contain the situation
where all the isolates are sensible and it is thus unlikely that this
antibiotic works all the time when it meets this pathogen.

\subsection{Concrete example of the quadrinomial case ($d=4$)}
\label{ss:ex2}

The present example is simply a $\chi^{2}$-test for independence. It deals
with the difference in behavior of male and female veterinary students with
respect to smoking habits. The following result was observed in a group of 12
veterinary students in Toulouse:

\begin{center}
\begin{tabular}{lcc}
  \hline
  {}&Smokers& Non-smokers\\  \hline
  Female & 3 & 8 \\
  Male & 10 & 5 \\
  \hline
\end{tabular}
\end{center}

The $\chi^{2}$-test rejects independence with a $P$-value $0.047$ and suggests
that more males than females smoke. This $P$-value is close to the critical
threshold of $0.05$ and was obtained with a small sample size. Therefore, one
can question whether this result can be trusted. A possible solution is to
build a confidence region. The table above can be seen as the realization
$\mathbf{x} = (3, 8, 10, 5)$ of a multinomial random variable $\mathbf{X}\sim
\mathcal{M}(26, \mathbf{p})$ with $\mathbf{p}=(p_{1},p_{2},p_{3},p_{4})$. If
the smoking habit and the gender are independent then $\mathbf{p}$ belongs to
$$
H_{0} %
=\left \{\mathbf{q}\in\Lambda_{4} \text{ such that } \mathbf{q} %
=\left (uv, (1-u)v,u(1-v),(1-u)(1-v)\right ) %
\text{ and } %
 (u,v)\in[0,1]^{2}\right \}.
$$
Since $p_{4}=1-p_{1}-p_{2}-p_{3}$, one can draw a graphic with only $p_{1},
p_{2},p_{3}$. Figure \ref{fig:chi2} shows (in green) the $95\%$ confidence
region for $\mathbf{p}$ built with the level-set method. The surface
corresponds to the null hypothesis $H_{0}$. The red area is the acceptance
region of the $\chi^{2}$-test. It turns out that $\mathbf{\hat p}=(3/26, 8/26,
10/26)$ does not belong to the acceptance region of the $\chi^{2}$-test.
However, the $95\%$-region for $\mathbf{p}$ obtained with the level-set method
cuts $H_{0}$. Therefore, according to Remark \ref{rm:tests} and in contrast to
the result given by the $\chi^{2}$-test, the independence hypothesis is not
rejected.

The $95\%$ level-set confidence region provides the following $95\%$
confidence interval for the \emph{odd-ratio}: $[0.024; 1.712]$. On the other
hand, the inversion of Fisher's exact test gives the $93.7\%$ interval
$[0.187; 2.625]$. This suggests that the level-set approach is less
conservative, probably due to the fact that Fisher's exact conditions on row
and column totals increases the discreteness of the problem.

\section{Final discussion}
\label{se:discussion}

The general concept of ``covering collection'' allows the construction of
confidence regions with controlled coverage, including the classical
Clopper-Pearson interval for the binomial and its multinomial extensions. The
covering collection construction involves an arbitrary growing collection of
sets in the data space. Our ``level-set'' confidence regions are obtained by
using a special collection based on level-sets of the data distribution. The
level-set regions for the multinomial parameter can be easily computed for any
$d$ and $n$. It turns out that they have excellent coverage probabilities and
mean volumes for $d\in\{2,3\}$ and $n\leq 30$. They are in particular
competitive with the famous Blyth-Still intervals for $d=2$. Also, we
recommend the level-set method, even if it can be computationally expensive
when $d$ is large. The behavior of these confidence regions when the ratio
$d/n$ tends to infinity is a very interesting open problem. In this extreme
case, the observation $\mathbf{X}$ is sparse and belongs to the boundary of
the observation simplex $E_\infty$. Note that the critical $n$ for which
$\mathbf{X}\sim\mathcal{M}(n,\mathbf{p})$ belongs to the interior of $E_d$
corresponds to the classical ``coupon collector problem''
\cite{MR0228020,MR1344451,MR959649}. Another interesting open problem is the
optimality of the level-set regions related to the control of
$\mathbb{P}(\mathbf{p'}\in R_\alpha(\mathbf{X}))$ with
$\mathbf{X}\sim\mathcal{M}(n,\mathbf{p})$ and $\mathbf{p}\neq\mathbf{p'}$. It
might be also interesting to extend the level-set method to more complex
situations such as hierarchical log-linear models for instance.

\section*{Acknowledgements}

The present version of this article has greatly benefited from the comments
and criticism of an Associate Editor and three anonymous referees.

{
\bibliographystyle{plain} %
\addcontentsline{section}{References}\footnotesize %
\bibliography{biblio}
}

 \vfill
 {\noindent\footnotesize \textsc{Djalil \textsc{Chafa\"\i}, corresponding
     author, } \url{d.chafai[@]envt.fr}

   \medskip\noindent
   \textsc{UMR181 INRA, ENVT, \'Ecole Nationale V\'et\'erinaire de Toulouse \\
     23 Chemin des Capelles, F-31076 Cedex 3, Toulouse, France.}

   \medskip\noindent
   \textsc{UMR 5219 CNRS, Institut de Math\'ematiques,
     Universit\'e de Toulouse \\
     118 route de Narbonne, F-31062 Cedex 4, Toulouse, France.}}

 \vfill

\begin{figure}[p]
  \begin{center}
    \includegraphics[scale=0.3,angle=-90]{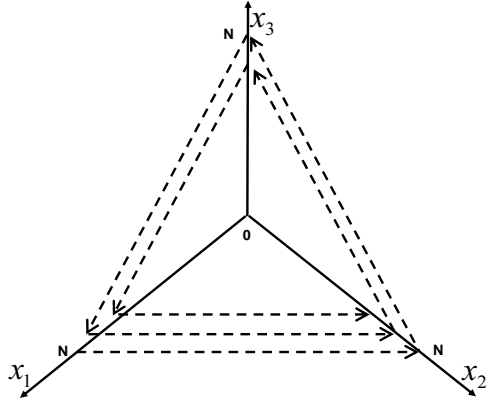}
    \caption{The construction of $A_k$ when $d=3$, with $A_0=\emptyset$ and
      $A_1=\{(n, 0, 0)\}$. The point in $A_1$ is at the beginning of the
      starting arrow represented as a dotted line. Each time the arrow meets a
      point in the simplex, this point is added to $A_k$ to give $A_{k+1}$.
      The set obtained with the three first arrows is invariant by permutation
      of coordinates.}
    \label{fig:crob0}
  \end{center}
\end{figure}

\begin{figure}[p]
  \begin{center}
    \includegraphics[scale=0.3,angle=0]{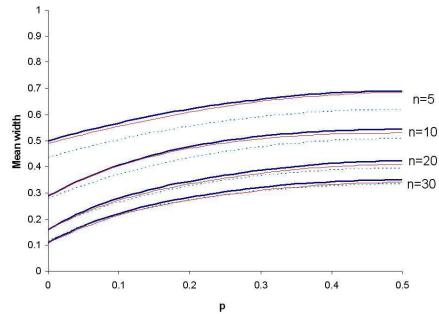}
    \caption{Binomial case $d=2$. The curves are the mean width of the
      $95\%$-intervals obtained with the Blyth-Still method (thick line), the
      level-set method (thin line) and the score method (dotted line) for
      $p_1\in[0, 0.5]$. The Blyth-Still method gives intervals with higher
      mean width irrespective of $p_1$. The score method always gives
      intervals with smaller width. Note that the score method fails to
      control the coverage probability. As $n$ increases, the differences
      between the mean widths of the respective intervals decrease.}
    \label{fig:crob1}
  \end{center}
\end{figure}

\begin{figure}[p]
  \begin{center}
    \includegraphics[scale=0.45,angle=0]{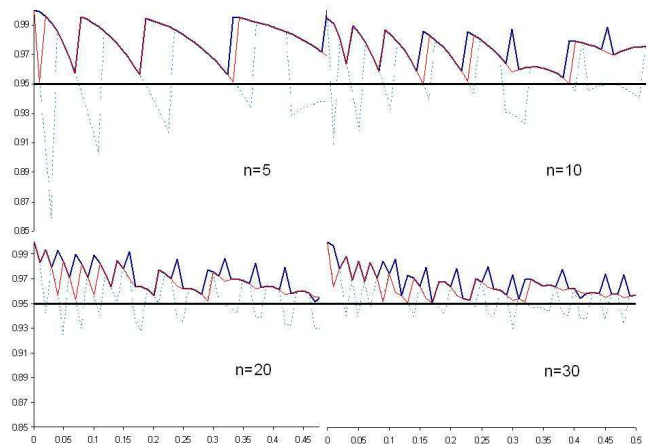}
    \caption{Binomial case $d=2$. These curves are the coverage of the
      $95\%$-intervals obtained with the Blyth-Still method (thick line), the
      level-set method (thin line) and the score method (dotted line) for
      $p_1\in[0, 0.5]$. The score method fails to control the coverage. The
      level-set method seems (nearly) uniformly better than the Blyth-Still
      method: its coverages are closer to $0.95$. When $n$ increases, the
      differences between these three methods decrease.}
    \label{fig:crob2}
  \end{center}
\end{figure}

\begin{figure}[p]
  \begin{center}
    \includegraphics[scale=0.28,angle=0]{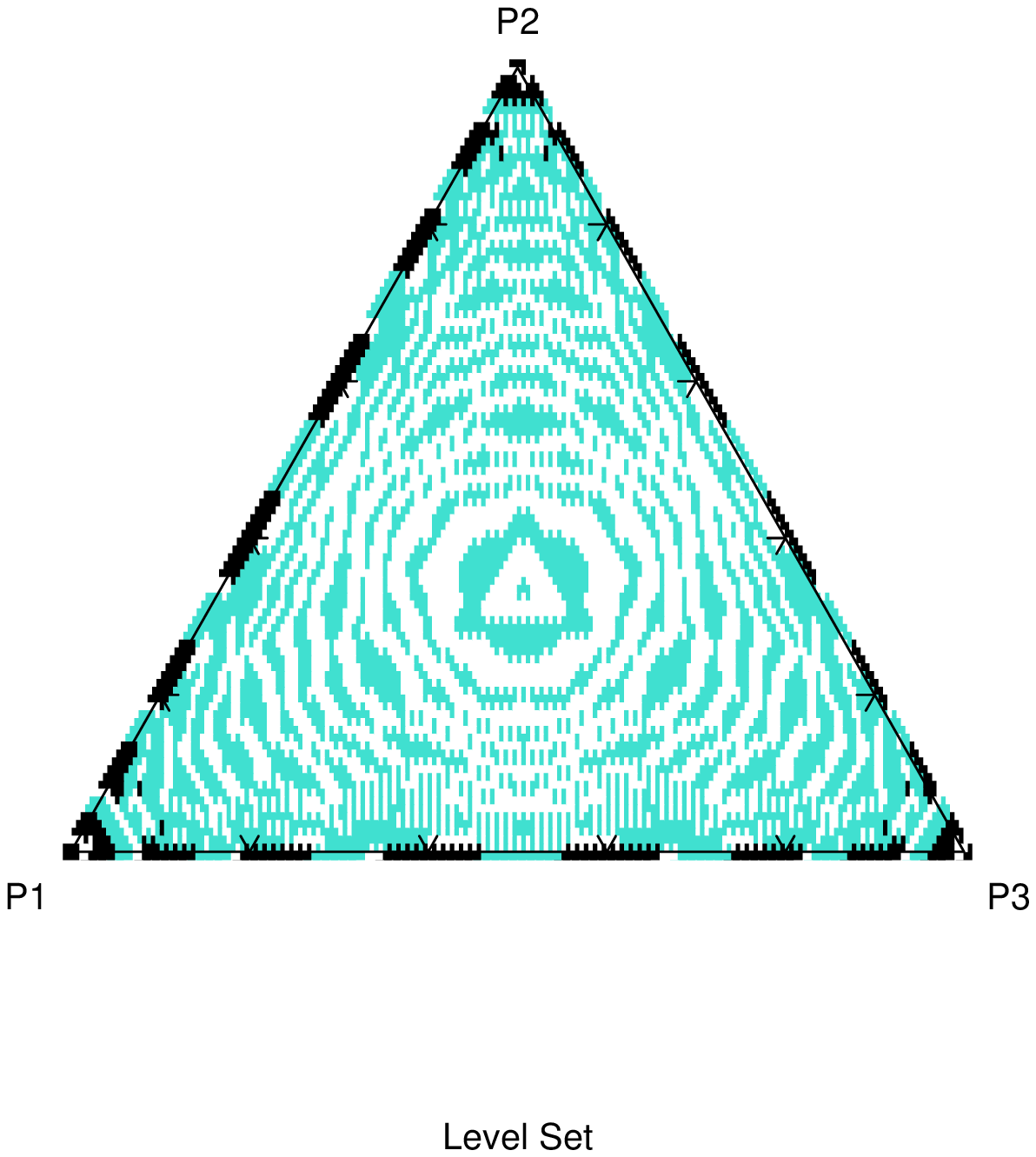}
    \includegraphics[scale=0.28,angle=0]{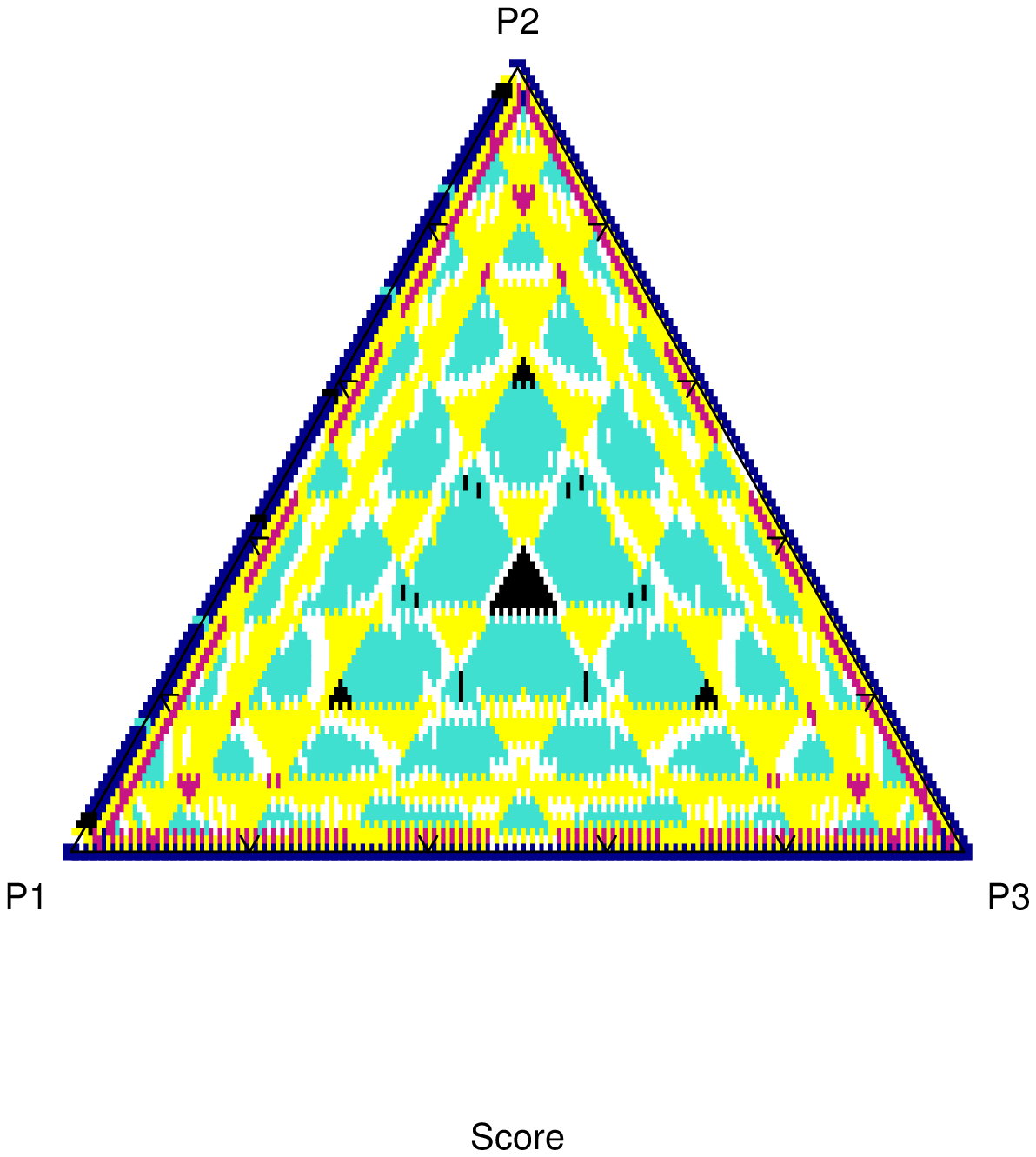}
    \includegraphics[scale=0.28,angle=0]{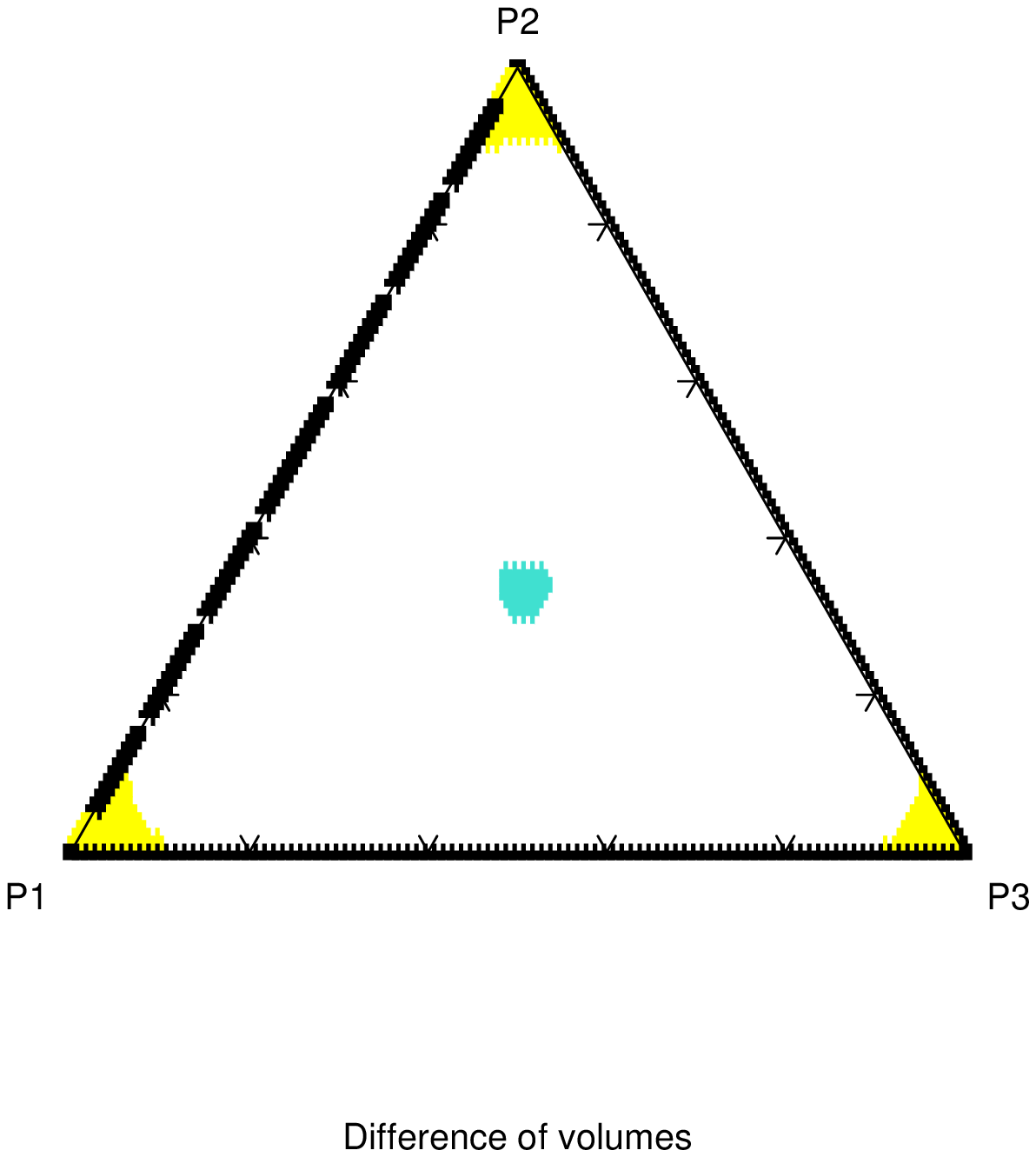}
  \end{center}
  \begin{center}
    \includegraphics[scale=0.28,angle=0]{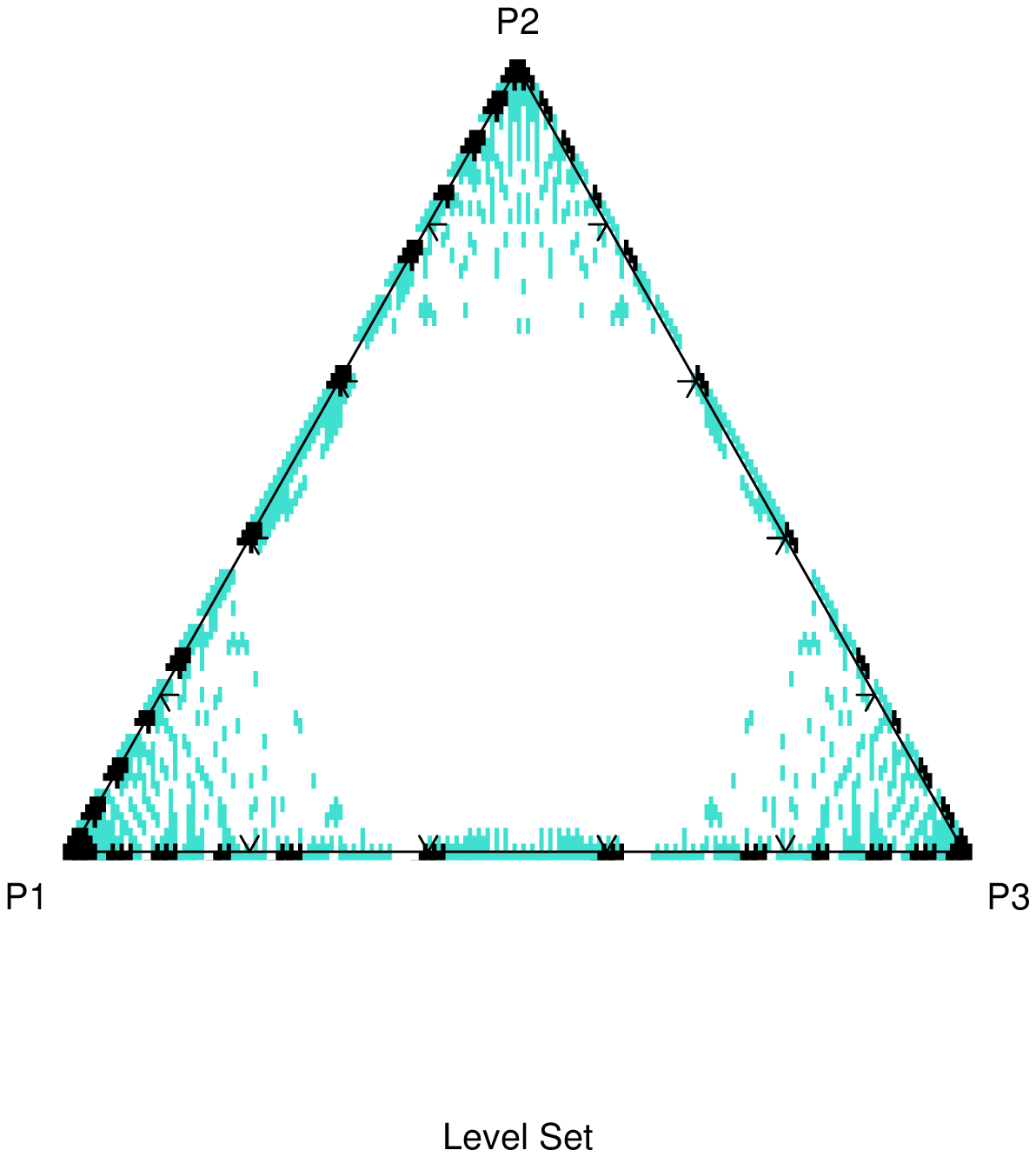}
    \includegraphics[scale=0.28,angle=0]{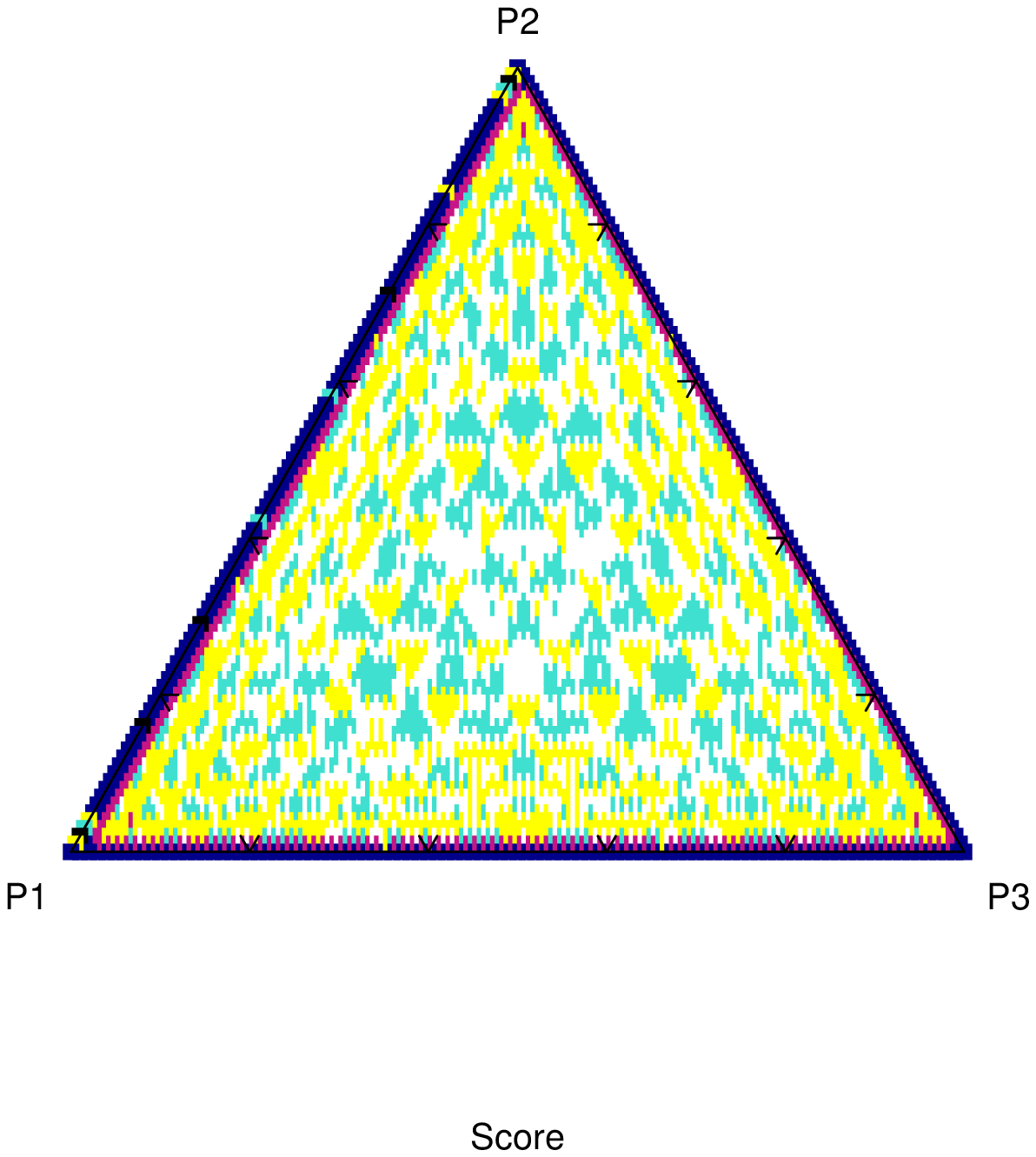}
    \includegraphics[scale=0.28,angle=0]{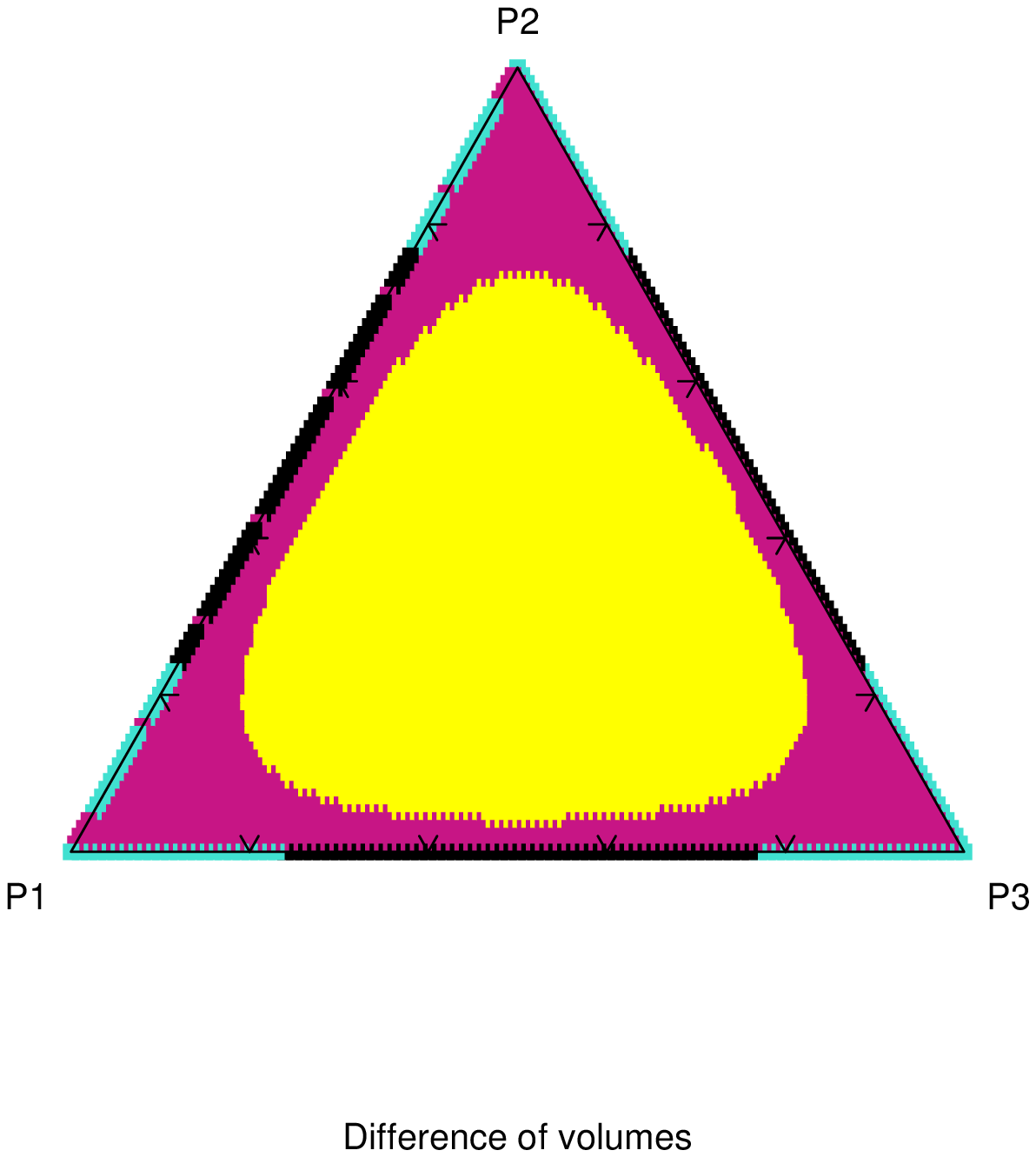}
  \end{center}
  \begin{center}
    \includegraphics[scale=0.28,angle=0]{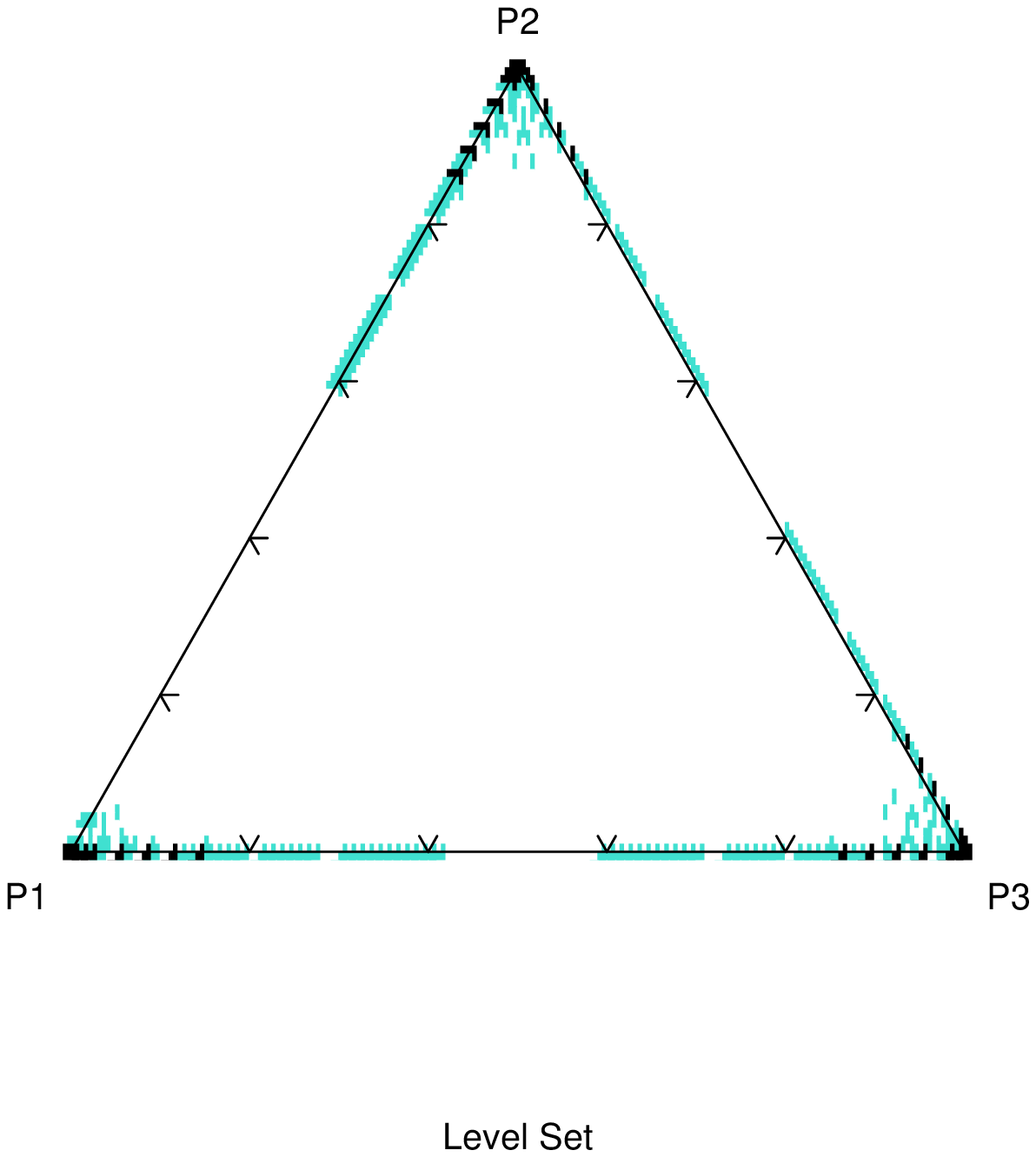}
    \includegraphics[scale=0.28,angle=0]{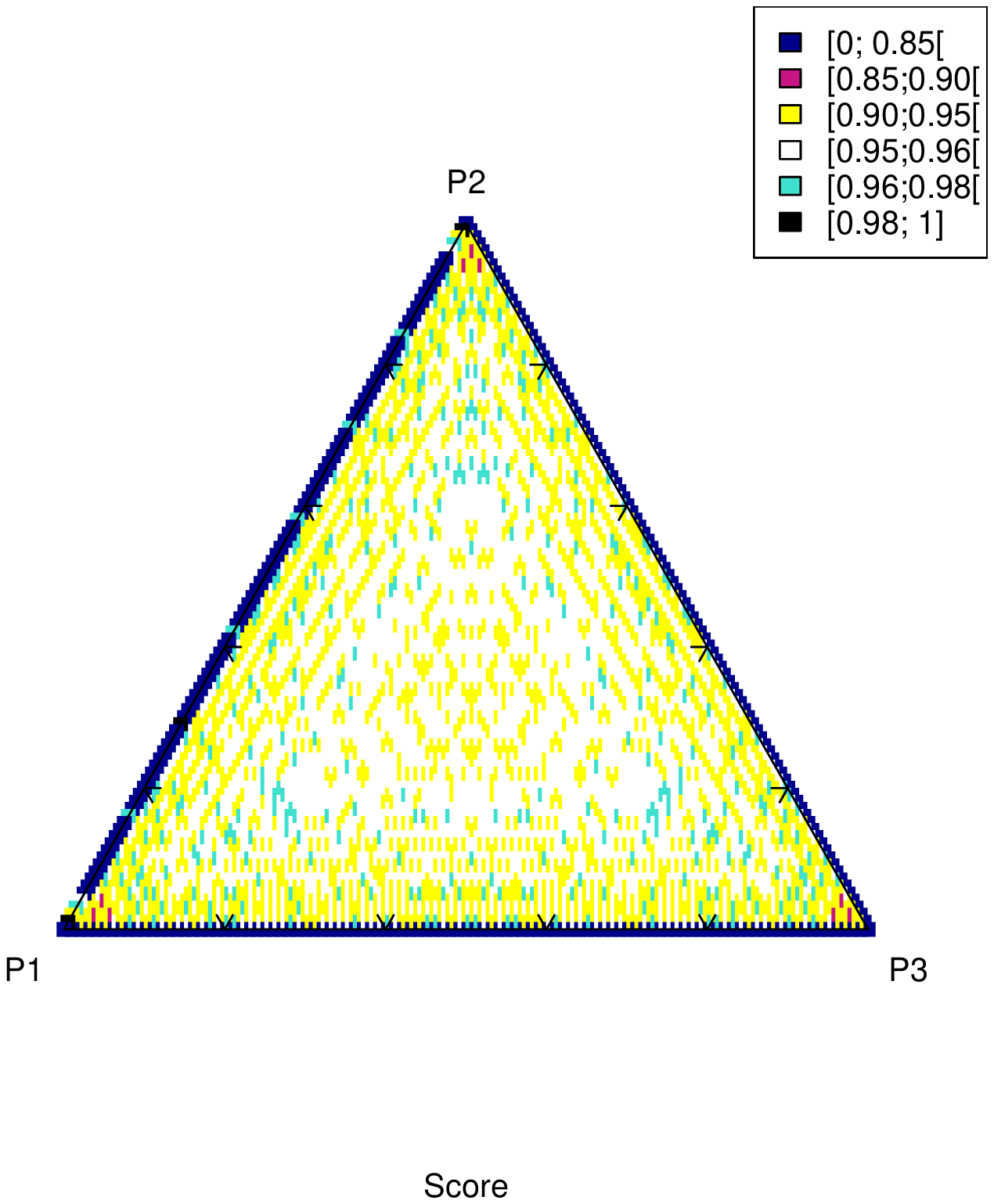}
    \includegraphics[scale=0.28,angle=0]{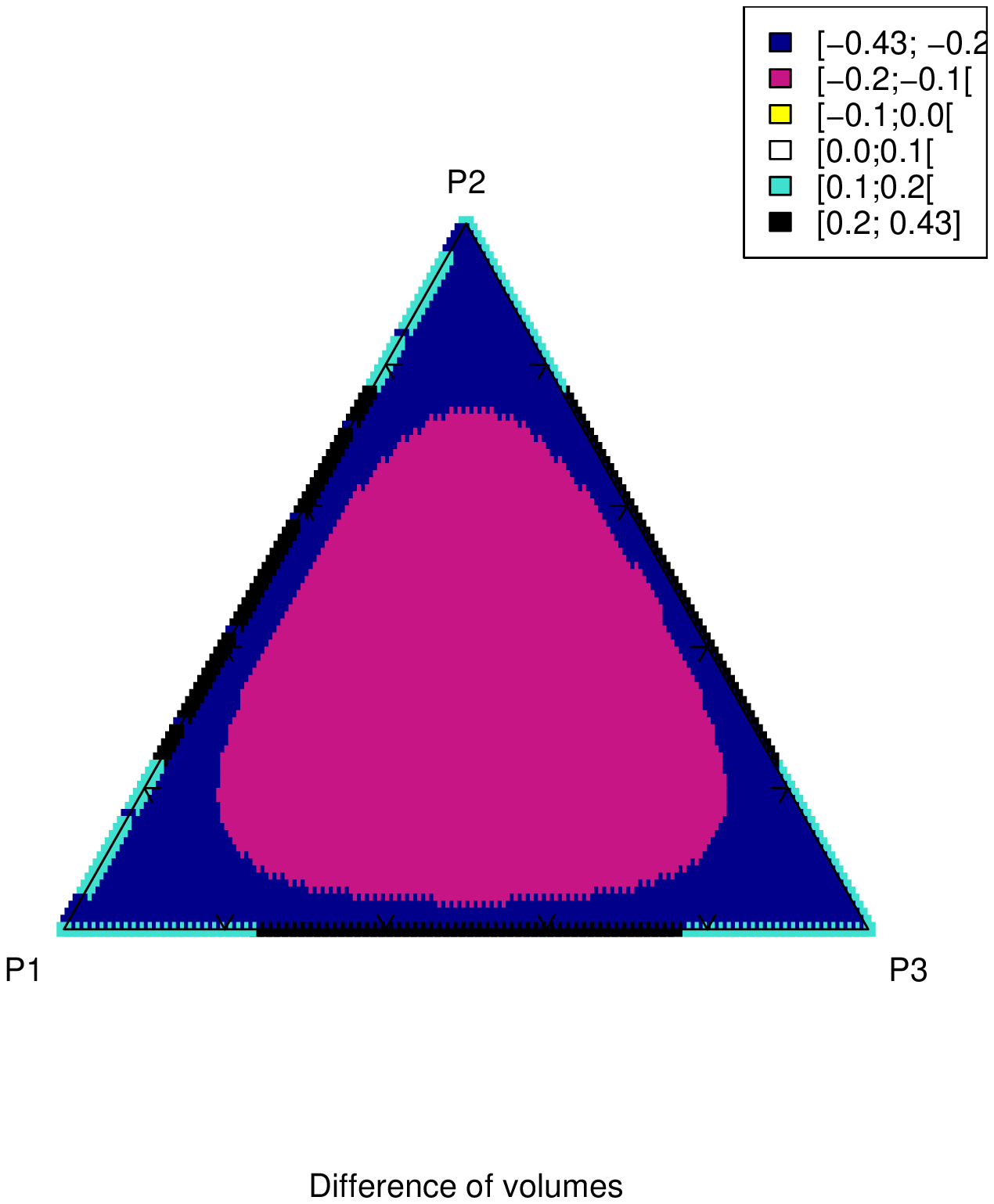}
    \caption{Trinomial case $d=3$. The columns give the coverages of the
      level-set method, the coverages given by the score method and the
      difference of mean volumes. The three rows correspond to
      $n\in\{5,10,20\}$. For the coverages graphs (first two columns), a clear
      color means that the coverage is close to $0.95$ whereas a dark blue
      color means that the coverage is smaller than $0.85$. For the volumes
      graphs (third column), a white color means that the difference of mean
      volumes is small whereas the blue, pink and yellow colors are used when
      the mean volume of the regions obtained with the level-set method are
      smaller than their counterpart obtained with the score method.}
       \label{fig:crob3}
  \end{center}
\end{figure}

\begin{figure}[p]
  \begin{center}
    \includegraphics[scale=0.45,angle=0]{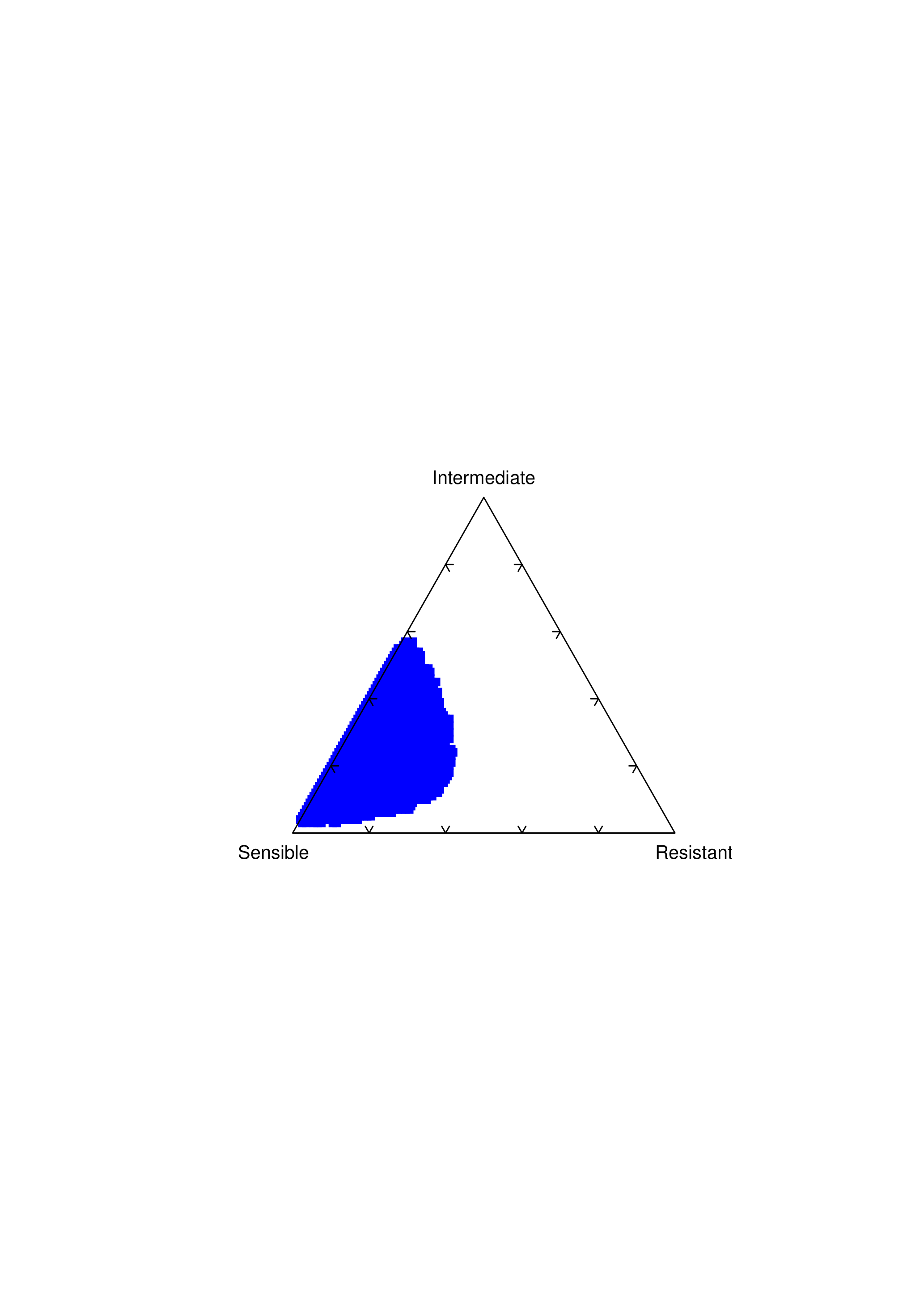}
    \caption{Trinomial case $d=3$ (example \ref{ss:ex1}). In barycentric
      coordinates, the $95\%$-region for $\mathbf{p}$ is constructed from the
      observation $\mathbf{x}=(0,2,8)$ of $\mathcal{M}_3(10,\mathbf{p})$. Note
      that the Wald method cannot be used here since the observation belongs
      to the boundary of the observation simplex $E_{3}$. In this example, the
      score and the level-set methods give approximately the same region.}
    \label{fi:ex1}
  \end{center}
\end{figure}

\begin{figure}[p]
  \begin{center}
    \includegraphics[scale=0.25,angle=0]{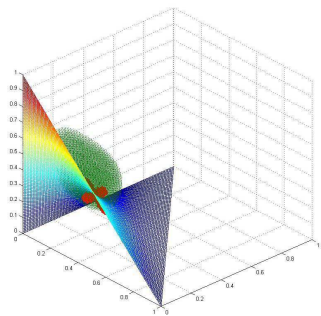}
    \caption{Quadrinomial case $d=4$ (example \ref{ss:ex2}). The axes
      correspond to $p_{1}$, $p_{2}$, and $p_{3}$. The null hypothesis $H_{0}$
      of the $\chi^{2}$-test is represented by the surface. The set in red is
      the acceptance region of the $\chi^{2}$-test. The region in green is the
      $95\%$-region for $\mathbf{p}$ built with the level-set method. It turns
      out that $\mathbf{\hat p}=(3/26, 8/26, 10/26)$does not belong to the
      acceptance region of the $\chi^{2}$-test while it belongs to the
      $95\%$-region for $\mathbf{p}$ built with the level-set method.
      Additionally, since this confidence region cuts $H_{0}$, the
      corresponding test does not reject $H_{0}$, in contrast to the
      $\chi^2$-test.}
    \label{fig:chi2}
  \end{center}
\end{figure}

\end{document}